\documentclass[10pt]{article}

\usepackage{scai}

\usepackage[authoryear, sort&compress, round]{natbib}
\usepackage{fontawesome5} % Provides the \faGithub icon
\usepackage{hyperref}     % Provides hyperlinking (\href and \url)
\usepackage{xcolor}       % Optional: for custom colors
\usepackage{booktabs}
\usepackage{multirow}
\usepackage{subcaption}
\usepackage{algorithm}
\usepackage{algpseudocode}
\usepackage{cleveref}
\usepackage{bm}
\usepackage{lipsum}
\usepackage{tikzducks}
\usepackage{adjustbox}
\usepackage{makecell}
\usepackage{multirow}
\usepackage{tabularx}
\usepackage{longtable}
\usepackage{xcolor}
\usepackage{placeins}
\usepackage{wrapfig}
\usepackage{amsthm}

\hypersetup{
    colorlinks=true,
    linkcolor=blue,
    urlcolor=blue,
}

\annotator{alice}{red}
\annotator{bob}{blue}

\newcommand{\taumodel}{\tau_{\mathrm{WK}}}
\newcommand{\tauhat}{\hat{\tau}_{\mathrm{WK}}}
\newcommand{\tauwave}{\tau_{\mathrm{wave}}}
\newcommand{\Qin}{Q_{\mathrm{in}}}

\newcommand{\TauTestWindows}{34869}
\newcommand{\TauTestPatients}{475}
\newcommand{\TauValidationWindows}{37888}
\newcommand{\TauValidationPatients}{474}
\newcommand{\TauBaselineTestMAE}{0.434}
\newcommand{\TauTestMAE}{0.297}
\newcommand{\TauTestPearson}{0.678}
\newcommand{\COTestPatients}{120}
\newcommand{\COTestWindows}{15509}
\newcommand{\COTestMAE}{1.059}
\newcommand{\COTestPE}{47.1\%}
\usepackage[english]{babel}
\newtheorem{proposition}{Proposition}

\vspace{-5pt}
\title{HIPNO: Symmetry-Aware Physics-Informed Neural Operators for Noninvasive Hemodynamic Inference}
\contribnote{equal}{Equal contribution.}              % auto-assigned: *
\contribnote{corr}{\ddag}{Corresponding author.}      % explicit symbol: ‡

\scaiauthor{1,equal}{Yunbei Pan}
\scaiauthor{1,2,equal}{Jiahang Sha}
\scaiauthor{1}{Simon A. Lee}
\scaiauthor{3}{Maxime Cannesson}
\scaiauthor{1,2}{Wei Wang}
\scaiauthor{1,4,corr}{Jeffrey N. Chiang}

\affiliation{1}{Department of Computational Medicine, UCLA}
\affiliation{2}{Scalable Analytics Institute (ScAi), Department of Computer Science, UCLA}
\affiliation{3}{Department of Anesthesiology \& Perioperative Medicine, UCLA}
\affiliation{4}{Department of Neurosurgery, UCLA}

\scaikeywords{physics-informed machine learning, neural operators, symmetry-aware learning, structural identifiability, noninvasive hemodynamic monitoring, physiological inverse problems, Windkessel model, cardiovascular physiology}

\metadata[Date]{July 22, 2026}
\metadata[Github]{\href{https://github.com/ybpan16/HIPNO}{\faGithub~https://github.com/ybpan16/HIPNO}}
\begin{abstract}
Continuous hemodynamic monitoring guides treatment decisions in surgery and intensive care. However, gold-standard signals are only measured in severe cases due to risks associated with invasive measurement. In this work, we introduce HIPNO (Hemodynamic Inference via Physics-informed Neural Operators) to recover hemodynamic state from ubiquitous, non-invasive signals and expand access to advanced monitoring. HIPNO addresses a problem of scale symmetry in physics-informed hemodynamic inference, where different combinations of flow, resistance, and compliance can generate the same observed pressure. We identify the symmetry group of the observation model and parameterize the network in its quotient space.
For the 3-element Windkessel model, the quotient coordinates are the compliance-normalized flow $U=Q/C$, the decay time constant $\taumodel=R_2C$, and the characteristic-impedance coordinate $\kappa=R_1C$. Across 945499 intraoperative windows from 2562 patients, HIPNO predicts $\tauwave$, a proxy for vascular decay derived from pressure, with 32\% lower error on the log scale than a population baseline while preserving mean arterial pressure accuracy. Because vascular decay and flow drive occupy separate coordinates, counterfactual perturbations produce the expected directional responses in at least 90\% of windows in almost all prespecified scenarios, a separation unavailable to pressure-only baselines. The coordinates are also used as inputs to a calibration model for monitored cardiac output. Finally, the formulation identifies the external compliance or flow reference required to recover absolute physical scale.
\end{abstract}

\begin{document}
\maketitle

\section{Introduction}
\label{sec:intro}

Continuous vital sign monitoring guides critical real-time decisions in intensive care and surgery, from fluid resuscitation to vasopressor titration. While electrocardiography (ECG) and photoplethysmography (PPG) are monitored routinely and non-invasively, the parameters that most directly characterize a patient’s hemodynamic state remain difficult to acquire continuously. Arterial blood pressure (ABP) is invasively measured via an arterial line catheter but can be approximated intermittently using an inflatable cuff (NIBP). Central venous pressure (CVP), cardiac output (CO), and vascular tone require central catheters positioned in major blood vessels near the heart. Because these invasive procedures carry material risks such as thrombosis and hemorrhage, clinical guidelines restrict their use primarily to patients who are already critically unstable \citep{cannesson2011hemodynamic}. Inferring these key targets from routinely available signals (ECG and PPG) would extend hemodynamic surveillance to broader patient populations and allow timely intervention before overt deterioration occurs.

While data-driven models have shown success in reconstructing blood pressure from noninvasive signals \citep{ibtehaz2022ppg2abp,kundu2025inn,chen2026versatile}, continuous volumetric flow and vascular tone remain largely unaddressed, primarily due to the scarcity of clinical ground truth measurements. For instance, cardiac output (CO) is seldom measured outside cardiac surgery, leaving ABP as an imperfect surrogate for overall hemodynamic state \citep{cannesson2011hemodynamic}. Fortunately, the physical governing equations coupling pressure and flow are well established \citep{westerhof2009arterial}. Physics-informed models leverage these underlying domain dynamics to compensate for data scarcity by imposing physical constraints that observational signals alone cannot provide \citep{RAISSI2019686,li2024bp,sel2023physics}.

This physics-guided inference comes with a structural non-identifiability obstacle. Even when a model fits the observed data and satisfies the governing equations, the underlying physiological state can remain mathematically undetermined. Specifically, when the observation model is invariant under a continuous transformation group, a family of distinct latent states yields identical observable signals. A prominent example is scale symmetry, where proportional rescaling of cardiac flow, vascular resistance, and arterial compliance maps to the exact same pressure waveform. We eliminate such ambiguity by identifying the underlying symmetry group of the observation model and formulating our learning process directly on the invariant quotient space it defines.

We introduce HIPNO (Hemodynamic Inference via Physics-informed Neural Operators), an operator model designed to infer invasive hemodynamic quantities from noninvasive ECG and PPG signals. By isolating scale ambiguity in quotient coordinates, HIPNO disentangles vascular tone ($\taumodel$) from flow drive ($U$, the compliance-normalized flow that scales cardiac output) while preserving pressure reconstruction fidelity. Our main contributions are summarized as follows:
\begin{itemize}
    \item We formalize a pressure invariant scale symmetry governing the 3-element Windkessel dynamics (Proposition~\ref{prop:scale}) and derive its explicit quotient coordinates $U$, $\taumodel$, and $\kappa$.
    \item We construct a neural operator in this quotient space, coupling ECG/PPG encoders with a physics-conditioned continuous trunk with staged supervision.
    \item We validate the coordinate representation by predictions on held-out patients and by counterfactual perturbations. Meanwhile, we characterize the minimal external measurement required to determine the absolute physical scale.
\end{itemize}
\begin{figure*}[t]
    \centering
    \includegraphics[width=\textwidth]{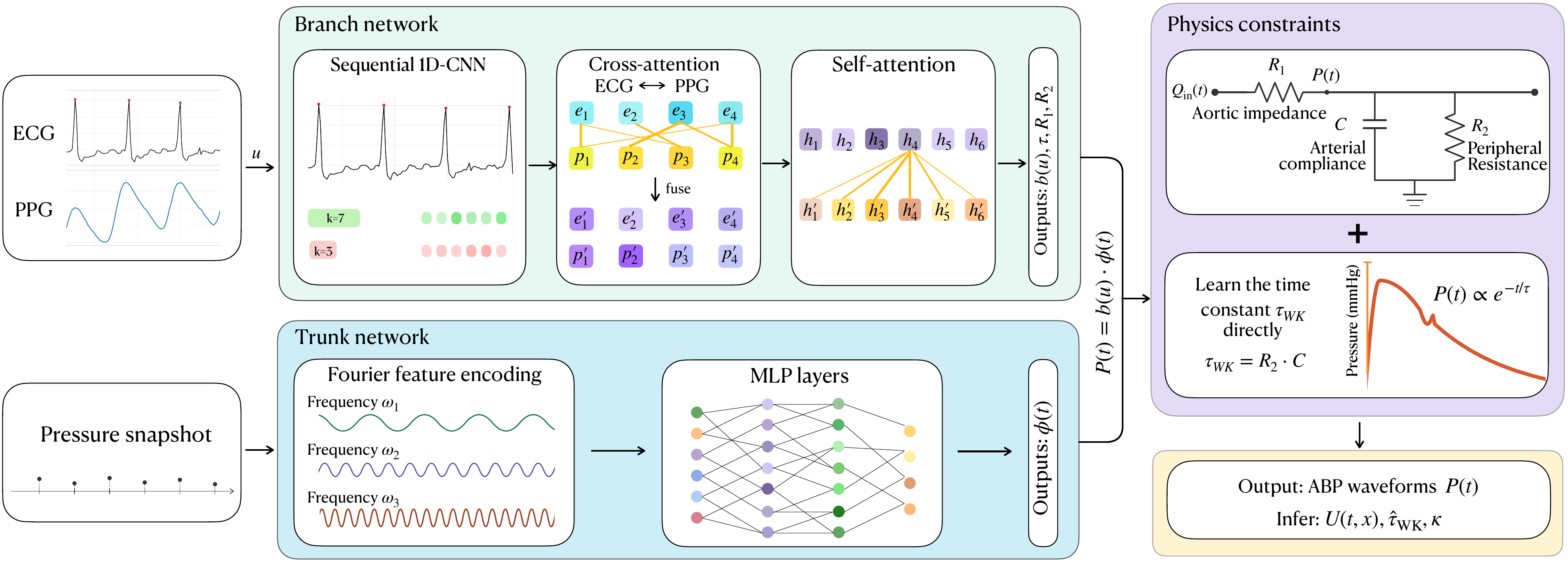}
    \caption{HIPNO architecture and information pathway. ECG, PPG, and cuff variables are clinically available inputs. Invasive ABP and the derived $\tauwave$ proxy are training targets used during development only. Monitor CO and EHR covariates are used only after the operator is frozen.}
    % \Description{The HIPNO flowchart shows ECG and PPG inputs entering a convolutional, cross-attention, and self-attention encoder network. Pressure query coordinates are passed to a Fourier feature trunk network. The coupled outputs feed Windkessel constraints and produce an arterial pressure waveform and latent hemodynamic quantities. The figure distinguishes deployment inputs from training targets used during development and from downstream calibration variables.}
    \label{fig:pipeline}
\end{figure*}

\section{Related Work}

\textbf{Symmetry-aware physics-informed learning.}
Exploiting known symmetries as an inductive bias is well established in machine learning, spanning group equivariant networks \citep{cohen2016group,weiler2019general} and geometric deep learning \citep{bronstein2021geometric} to physical science models \citep{villar2021scalars}. 
% Separately, work on identifiability shows that parameters in inverse problems are recoverable only up to invariant symmetry transformations \citep{bellman1970structural,khemakhem2020variational}. While physics-informed networks and operators constrain predictions with differential equations \citep{RAISSI2019686,kovachki2023neural,lu2021learning}, these constraints often leave underlying symmetries intact. Cardiovascular models combining 1D flow or Windkessel dynamics with neural prediction learn raw parameters jointly, leaving the scale ambiguity formalized in Section~\ref{sec:identifiability} unaddressed \citep{li2024bp,sel2023physics,osman2026cardiovascular}. Rather than embedding equivariance or merely characterizing residual ambiguity, HIPNO operates directly in the symmetry group's quotient space.
% 
A complementary line of work on identifiability in latent variable and inverse problems shows that parameters are generally recoverable only up to the transformations a symmetry leaves invariant \citep{bellman1970structural,khemakhem2020variational}. Physics-informed neural networks and operator models constrain predictions with differential equations \citep{RAISSI2019686,kovachki2023neural,lu2021learning}. However, those constraints can leave such symmetries intact in the inverse problem. Cardiovascular models that combine one-dimensional flow or Windkessel dynamics with neural prediction learn resistance, compliance, and flow together in their original, unnormalized coordinates, and so do not address the scale ambiguity that we formalize in Section~\ref{sec:identifiability} \citep{li2024bp,sel2023physics,osman2026cardiovascular}. Rather than embedding equivariance into feature maps or only characterizing the residual ambiguity, HIPNO works directly in the quotient space of that symmetry group.

\textbf{Noninvasive pressure reconstruction.}
Pulse timing methods map delays between electrical activity and peripheral pulse arrival to scalar pressure estimates \citep{ding2019pulse,mukkamala2015toward, zhou2026physiology}. Learned waveform models reconstruct continuous ABP from PPG or multimodal signals \citep{ibtehaz2022ppg2abp,kundu2025inn,chen2026versatile}. While providing strong predictive baselines, their latent variables are unconstrained representations of pressure.

\textbf{Pressure-derived time constants and CO.}
The Windkessel diastolic time constant can be estimated from invasive pressure under assumptions of negligible diastolic inflow and a specified venous pressure \citep{westerhof2009arterial,bikia2024novel}. Pulse contour monitors estimate CO from arterial pressure and patient covariates \citep{slagt2014systematic,manecke2005edwards}. Correlation measures association at the cohort level, whereas agreement in method comparison additionally depends on bias and dispersion within each patient \citep{critchley1999meta}.

\section{Methods}
\label{sec:methods}

\subsection{Observation Model and Outputs}
\label{sec:formulation}

For a window $[0,T]$, let $S_{\mathrm{ECG}}$ and $S_{\mathrm{PPG}}$ denote ECG and PPG signals. Their joint input is
\begin{equation}
    \mathbf{u}(t)=\bigl(S_{\mathrm{ECG}}(t),S_{\mathrm{PPG}}(t)\bigr)
    \in L^2([0,T],\mathbb{R}^2).
\end{equation}
Each window is also paired with a cuff-derived vector $\mathbf{b}\in\mathbb{R}^6$. Its entries are normalized SBP, DBP, MAP, pulse pressure, the signed time offset from the window midpoint to the nearest NIBP observation, and a validity indicator. Invalid cuff values are replaced by zero and marked by the indicator. Each window is matched to the nearest NIBP observation recorded within 300s. Appendix Section~\ref{app:architecture} gives the normalization constants.

HIPNO maps $(\mathbf{u},\mathbf{b})$ to reconstructed pressure $P(t)$, latent arterial fields, and positive Windkessel coordinates $(\kappa,\taumodel)$. The fields are cross-sectional area $A(t,x)$, flow normalized by compliance $U(t,x)$, and transmural pressure $P_{\mathrm{tm}}(t,x)$ on the normalized arterial segment $x\in[0,L]$ with $L=1$.

The core operator can be written as
\begin{equation}
    \mathcal G_{\Theta}(\mathbf u,\mathbf b)(t,x)
    =\{P(t),A(t,x),U(t,x),P_{\mathrm{tm}}(t,x),\kappa,\taumodel\}.
    \label{eq:operator}
\end{equation}
The coordinate arguments distinguish the formulation from a model with a fixed output resolution. HIPNO is evaluated at arbitrary times through a learned cardiac phase and at arbitrary locations through a normalized arterial coordinate. Both training and evaluation use 250 temporal points per 10s window and 16 output locations in space.

\subsection{Scale Ambiguity and Invariant Coordinates}
\label{sec:identifiability}

At the distal arterial boundary, the 3-element Windkessel equations are
\begin{align}
    \dot P_C(t) &= \Qin(t)/C-\{P_C(t)-P_v\}/(R_2C), \\
    P_{\mathrm{WK}}(t) &= P_C(t)+R_1\Qin(t),
\end{align}
where $P_C$ is the capacitive pressure and $P_v$ is venous back pressure \citep{belz1995elastic,westerhof2009arterial,segers1999pulmonary}. The model predicts $P_v$ for each window from the global latent representation. Its soft prior and bounds are given in Appendix~\ref{app:architecture}.

Two parameter settings are observationally equivalent when they generate the same pressure for the same initial condition over the observation interval. Any continuous family of such settings creates a structural ambiguity. The following result identifies one such family, independently of the choice of architecture or optimizer.

\begin{proposition}
\label{prop:scale}
For any $s>0$, define $C'=sC$, $\Qin'=s\Qin$, $R_1'=R_1/s$, and $R_2'=R_2/s$. The Windkessel pressure is unchanged, so $(\Qin,R_1,R_2,C)$ cannot be uniquely recovered from pressure.
\end{proposition}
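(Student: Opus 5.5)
The plan is to substitute the transformed parameters directly into the two Windkessel equations and check that every quantity entering the pressure is a scale-invariant combination. Concretely, I will show that the capacitive ODE, written in the primed parameters, is the same ODE in $P_C$ as the unprimed one. Then uniqueness of solutions for a linear ODE with the same initial condition $P_C(0)$ gives $P_C'\equiv P_C$ on $[0,T]$. After that I check the output equation.

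The steps, in order, are as follows. First I compute the three invariant combinations:
\begin{align*}
\Qin'/C' &= s\Qin/(sC) = \Qin/C = U, \\
R_2'C' &= (R_2/s)(sC) = R_2C = \taumodel, \\
R_1'\Qin' &= (R_1/s)(s\Qin) = R_1\Qin.
\end{align*}
Second, I substitute the first two into $\dot P_C' = \Qin'/C' - \{P_C'-P_v\}/(R_2'C')$. This yields $\dot P_C' = U(t) - \{P_C'-P_v\}/\taumodel$, which is the original equation with the same forcing and the same decay rate, with $P_v$ held fixed. The right-hand side is affine in $P_C$ with integrable forcing (assuming $\Qin\in L^1$ or piecewise continuous). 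Picard--Lindel\"of, or simply the explicit variation-of-constants formula $P_C(t)=P_v+e^{-t/\taumodel}(P_C(0)-P_v)+\int_0^t e^{-(t-\sigma)/\taumodel}U(\sigma)\,d\sigma$, then shows $P_C'=P_C$ for equal initial data. Third, I use the third invariant: $P_{\mathrm{WK}}' = P_C' + R_1'\Qin' = P_C + R_1\Qin = P_{\mathrm{WK}}$.

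For the non-identifiability conclusion, I note that for $s\neq 1$ the map $s\mapsto(s\Qin,R_1/s,R_2/s,sC)$ is injective whenever $C>0$. It therefore traces a one-parameter curve of distinct parameter tuples, all observationally equivalent in the sense defined just above the proposition. Hence no estimator based on pressure alone can single out $(\Qin,R_1,R_2,C)$. As a remark, I will point out that the invariants $U$, $\taumodel$, and $\kappa=R_1C$ separate the orbits. Indeed, $R_1\Qin=\kappa U$, so pressure depends only on $(U,\taumodel,\kappa,P_v)$. This motivates the quotient coordinates.

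The only delicate point, and it is minor, is the interpretation of ``same initial condition." One must fix $P_C(0)$, not some scaled state, because $P_C$ is itself a pressure and should not rescale. I will state this explicitly and note that $P_v$ is left untouched by the group action. Otherwise, the argument is a direct calculation.
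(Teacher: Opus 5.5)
Your proposal is correct and follows the paper's proof. The paper likewise reduces the claim to the three invariances $\Qin'/C'=\Qin/C$, $R_2'C'=R_2C$, and $R_1'\Qin'=R_1\Qin$, and concludes that both settings produce the same $P_C$ from a common initial condition and hence the same $P_{\mathrm{WK}}$. Your variation-of-constants formula, the injectivity of the orbit map, and the remark that $(U,\taumodel,\kappa)$ separate orbits spell out what the paper leaves implicit.
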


\begin{proof}
Under this substitution, $\Qin'/C'=\Qin/C$, $R_2'C'=R_2C$, and $R_1'\Qin'=R_1\Qin$. Every coefficient of the two boundary equations is therefore unchanged, so both settings generate the same $P_C$ from a common initial condition and thus the same $P_{\mathrm{WK}}$.
\end{proof}

This result motivates
\begin{equation}
    U(t,x)=\frac{Q(t,x)}{C},\quad
    \taumodel=R_2C,\quad
    \kappa=R_1C.
\end{equation}
At $x=L$, $U_L(t)=U(t,L)$ and the boundary equations become
\begin{align}
    \dot P_C(t) &= U_L(t)-\frac{P_C(t)-P_v}{\taumodel},
    \label{eq:windkessel_identifiable}\\
    P_{\mathrm{WK}}(t) &= P_C(t)+\kappa U_L(t).
    \label{eq:pabp_identifiable}
\end{align}
These coordinates quotient out the scale symmetry.

A pressure and residual loss is constant along this scale family. Priors or regularization can select a representative, but cannot identify the underlying physical values from pressure alone. Thus, working directly in the invariant coordinates makes the latent parameters interpretable, since they are the only quantities that pressure data can determine.

The transformed boundary model removes this scale degree of freedom because $U_L$, $\kappa$, and $\taumodel$ do not change with $s$. During systole, $U_L(t)$ is a learned field and $\kappa U_L(t)$ enters the pressure equation as a product, so that $\kappa U_L=R_1\Qin$ recovers the characteristic impedance term. The model predicts $\kappa$, $\taumodel$, $P_v$, and $P_C(0)$ per window, constrained by priors, the boundary ODE, and the arterial residuals.

Cardiac output follows
\begin{equation}
    \mathrm{CO}\approx\frac{(\mathrm{MAP}-P_v)C}{\taumodel},
    \label{eq:co}
\end{equation}
where $C$ remains unobserved under the pressure-only model.

During diastole, the approximation $U_L\approx0$ gives
\begin{equation}
    P_C(t)=\{P_0-P_v\}\exp(-t/\taumodel)+P_v.
    \label{eq:diastolic_decay}
\end{equation}
Thus $\taumodel$ is a diastolic decay time constant. Individual values of $R_2$ and $C$ remain unidentified. The proxy $\tauwave$ derived from pressure (the diastolic decay fit defined in Section~\ref{sec:data}) assumes negligible residual inflow and reliable valve closure localization. Sensitivity to the finite window range appears in Appendix Table~\ref{tab:tau_sensitivity}. The operator is then built in these coordinates (Sections~\ref{sec:architecture} and~\ref{sec:training}), and Section~\ref{sec:discussion} distills the full construction into a general four-step procedure.

\subsection{Neural Operator with a Learned Cardiac Phase}
\label{sec:architecture}

HIPNO uses separate convolutional Transformer encoders for ECG and PPG. A cross-attention decoder fuses their local tokens with a window representation. The model learns a monotone cardiac phase
\begin{equation}
    \dot\phi(t)=\omega_0\bigl[\operatorname{softplus}\{f_{\mathrm{cnn}}(S_{\mathrm{ECG}}(t))\}+\epsilon\bigr],
    \quad \phi(t)=\int_0^t\dot\phi(s)\,ds,
\end{equation}
where $\omega_0$ is initialized at $2\pi\times1.25$ Hz and $\epsilon=10^{-4}$. The cyclic coordinate $\theta(t)=\phi(t)\bmod 2\pi$ aligns within-beat morphology. The normalized coordinate $p(t)=\phi(t)/\phi(T)$ preserves progress across the 10s window.

\textbf{Modality encoding and fusion.}
Separate encoders preserve the morphology specific to each modality, such as QRS timing in ECG and pulse arrival shape in PPG, while cross-attention captures their complementary timing information. Local queries aligned to phase attend separately to the ECG and PPG encoder outputs before fusion into a representation varying over time. A separate learned query produces the global window representation $\mathbf z_w$. Appendix Section~\ref{app:architecture} gives the exact architecture hyperparameters.

\textbf{Generating fields from coordinates.}
The trunk network \citep{lu2021learning} maps Fourier features of phase and space to continuous arterial fields. It excludes normalized progress and local decoder context, so the physics branch is conditioned on phase, space, and the global window state, with no separate waveform decoder. The trunk is conditioned on $\mathbf z_w$ through feature-wise linear modulation. Appendix Section~\ref{app:architecture} gives the exact dimensions. The two trunk outputs generate
\begin{equation}
    A=\operatorname{softplus}(s_A h_A+b_A)+10^{-3},\quad
    U=(s_U h_U+b_U)S_U,
\end{equation}
where the per-window scales are positive and, with the biases, depend on $\mathbf z_w$. The full $U(t,x)$ field is shifted to a smooth latent conditioned inlet pulse, and its scale $S_U$ is penalized on the log scale toward a fixed value, which sets the otherwise arbitrary normalization of the residual flow. The Windkessel head predicts $\kappa$, $\taumodel$, and $P_v$ for each window, with numerical bounds in Appendix Section~\ref{app:architecture}.

The direct branch predicts a separate waveform from phase, window progress, local decoder context, global state, and the cuff embedding. A learned cuff affine map adjusts its scale and offset. The physics branch, conditioned on $\mathbf z_w$ through FiLM \citep{perez2018film}, generates the full fields $A(t,x)$ and $U(t,x)$ and evaluates the arterial residuals. The tube law in Appendix~\ref{app:pde} yields $P_{\mathrm{tm}}$. The physics pressure is
\begin{equation}
    P_{\mathrm{phy}}(t)=P_{\mathrm{WK}}(t)
    +\eta\{P_{\mathrm{tm}}(t,0)-P_{\mathrm{tm}}(t,L)\},
\end{equation}
where $\eta=1.2\operatorname{sigmoid}(\eta_{\mathrm{raw}})$ is a learned dimensionless global scalar. The final pressure is
\begin{equation}
    P(t)=\alpha P_{\mathrm{dir}}(t)+(1-\alpha)P_{\mathrm{phy}}(t).
    \label{eq:pfinal}
\end{equation}

For a piecewise constant $U_L$ over $[t_i,t_{i+1}]$, the boundary ODE has the exact update
\begin{equation}
\begin{split}
    P_C(t_{i+1})={}e^{-\Delta t_i/\taumodel}P_C(t_i)+\left(1-e^{-\Delta t_i/\taumodel}\right)
    \left\{P_v+\taumodel U_L(t_i)\right\}.
\end{split}
\label{eq:wk_discrete}
\end{equation}
We integrate the boundary ODE with this exact update, avoiding the error introduced by explicit time stepping. The initial state is predicted from $\mathbf z_w$ as $P_C(t_0)=P_v+\operatorname{softplus}\{g(\mathbf z_w)\}+10^{-3}$. A periodic loss compares the first and last capacitive pressures over one learned cardiac cycle. We detach the Windkessel rollout from its parameter gradients, so the explicit $\taumodel$ objectives and priors train the window-level Windkessel parameters while the rollout constrains the field and pressure pathways.

The blend coefficient is a single global value shared across windows. Its selected value is reported in Appendix \ref{app:architecture}.

\subsection{Training Objectives and Curriculum}
\label{sec:training}

The phase stage minimizes
\begin{equation}
    \mathcal L_{\mathrm{phase}}=\mathcal L_{\mathrm{ref}}
    +\mathcal L_{\mathrm{peak}}+0.5\mathcal L_{\mathrm{mono}}
    +0.1\mathcal L_{\mathrm{smooth}}.
\end{equation}
The reference loss is offset invariant. A peak alignment term aligns ECG R peaks to multiples of $2\pi$. The last two terms penalize negative and rapidly changing phase increments.

The waveform stage uses a masked reconstruction objective combining absolute waveform, standardized shape, derivative, correlation, beat, and scalar pressure terms. Its exact composition and coefficients are given in Appendix Section~\ref{app:training_losses}.

Outside the phase stage, the training objective is
\begin{equation}
\begin{split}
\mathcal L=\mathcal L_{\mathrm{rec}}+\lambda_d\mathcal L_{\mathrm{dist}}
+\lambda_p\mathcal L_{\mathrm{phys}}+\lambda_r\mathcal L_{\mathrm{par}}+\lambda_b\mathcal L_{\mathrm{NIBP}}+\lambda_\tau\mathcal L_\tau
+\lambda_m\mathcal L_{\tau,\mathrm{morph}}+\lambda_v\mathcal L_{P_v}+\lambda_s\mathcal L_{S_U}.
\end{split}
\label{eq:total_loss}
\end{equation}
$\mathcal L_{\mathrm{dist}}$ is MSE to a frozen pressure teacher captured after the waveform stage. The cuff loss applies Smooth L1 loss to predicted SBP, DBP, and MAP after division by 20 mmHg. The direct time constant loss is a log-MSE against the proxy derived from pressure weighted by inverse variance, with weights $\mathrm{SE}(\log\tauwave)^{-2}$ on valid windows. It uses the same ABP-derived target evaluated in Section~\ref{sec:tau-consistency}, so that windows with more reliable decay fits contribute more to the objective. A second supervised term estimates log decay from the predicted blended pressure on diastolic segments and asymptotes selected from the true ABP, using a Smooth L1 loss with transition point $\beta=0.20$.

The physics group enforces Windkessel coupling, PDE residuals, periodicity, and parameter priors. Appendix Table~\ref{tab:supervision} lists every loss group with its weights and supervision source. We fix the compliance normalization to one and set the normalized distal resistance equal to $\taumodel$.

The curriculum has four stages over 140 epochs: phase learning, direct waveform learning, physics activation with a decreasing blend weight, and joint refinement. All reported results use a single trained model. The full training configuration is given in Appendix \ref{app:architecture} and Table~\ref{tab:supervision}.

Introducing the physics pathway against a frozen teacher preserves the waveform solution and is an optimization choice.

\section{Data and Evaluation Design}
\label{sec:data}

\subsection{Study Cohort}
% {\color{blue} VitalDB contains synchronized intraoperative ECG lead II, PPG, invasive ABP, NIBP, and clinical records from 2808 patients (2881 operations) sampled near 500 Hz. Recordings are divided into non-overlapping 10-second windows and resampled to 250 data points for the HIPNO model.

% To ensure data validity, we applied a retrospective five-step quality control (QC) filter evaluating signal coverage, waveform morphology (ECG, PPG, ABP), and ECG-ABP pulse timing. This process retained 945,499 high-quality windows from 2,562 patients, representing 25.2\% of the initial candidate pool.

% Patients were partitioned into five folds—stratified by eligible window count and label availability (NIBP, CVP, CO)—ensuring no data leakage across splits. This yielded a 60\%/20\%/20\% split for training (616,385 windows), validation (163,978), and testing (165,136). Specifically, the test set includes 59,239 ABP-evaluable windows (498 patients), \TauTestWindows{} $\tauwave$-eligible windows (\TauTestPatients{} patients), and \COTestWindows{} CO calibration windows (\COTestPatients{} patients); the latter also serves as the external reconstruction benchmark.

% Detailed preprocessing steps, QC rules, and comprehensive cohort statistics are provided in Appendix Section~\ref{app:waveform_preprocessing} and Tables~\ref{tab:qc_rules}--\ref{tab:task_cohorts}.}

We use the subset of VitalDB with synchronized intraoperative ECG lead II, PPG,
invasive ABP, NIBP, and clinical records from 2808 patients sampled near 500 Hz
\citep{lee2022vitaldb}. Recordings are divided into nonoverlapping 10s windows
and resampled to 250 points. Five rule groups then screen finite-sample
coverage, ECG rhythm, PPG morphology, ABP morphology, and pulse timing between
ECG and ABP (Appendix Table~\ref{tab:qc_rules}). The timing rule uses invasive
ABP, so this screen is retrospective. Quality control retains 1204591 windows from 2562 patients. Then, chronological stratified sampling caps each patient at 800 windows in training and 500 in
validation and test, giving the 945499 windows as model input (25.2\% of the
3746095 initial candidates).

Patients are partitioned into five folds stratified by eligible window count and
by NIBP, CVP, and CO label availability, so that all windows from one patient
remain in one split. Three folds form the training set and one fold each forms
the validation and test sets, giving 616385, 163978, and 165136 windows from
1535, 512, and 515 patients. The test partition provides 59239 ABP-evaluable
windows from 498 patients, \TauTestWindows{} $\tauwave$-eligible windows from
\TauTestPatients{} patients, and \COTestWindows{} CO calibration windows from
\COTestPatients{} patients, which also serve as the external reconstruction
benchmark. Appendix Section~\ref{app:waveform_preprocessing} and
Tables~\ref{tab:cohort} and~\ref{tab:task_cohorts} list the preprocessing steps
and full cohort counts.

\subsection{Reference Targets and Evaluation Cohorts}

The time constant reference is a proxy derived only from invasive ABP by identifying diastolic decay segments after the systolic downstroke, fitting the exponential decay model
\begin{equation}
    \log\{P(t)-P_\infty\}=a-t/\tau_{\mathrm{wave}}.
    \label{eq:tau_proxy}
\end{equation}
and aggregates the valid fits for individual beats within each window with weighting by inverse variance. Failed fits are marked invalid and are excluded from the time constant loss. Among windows with valid targets, median $\tauwave$ is 0.499 s, median fit $R^2$ is 0.969, and the median contributing beat count is nine. Appendix Section~\ref{app:tau_reference} gives the full fitting thresholds.

The CO cohort is formed from the optional monitor channels Vigileo, EV1000, Vigilance, CardioQ, and Solar8000 CO. Monitor CO is recorded approximately once per minute, whereas each analysis window lasts 10 s. Consequently, most clean windows contain no CO observation and only a minority contain one. No retained window contains multiple CO observations. A window is included in the CO cohort only when it contains one finite observation, which is used directly as the label with no averaging within the window. The label is matched to the frozen operator output by subject, record, and window, with values retained from 1 to 15 L/min.
Because these device outputs do not form a uniform independent standard, later calibration models include device indicators. Appendix Table~\ref{tab:co_devices} gives the complete device level counts before and after cleaning. Flotrac devices may span multiple product generations, whose arterial pressure waveform methods have evolved over time \citep{slagt2014systematic}.

The six cuff features described in Section~\ref{sec:formulation} are supplied to HIPNO as clinically available calibration variables. A conditioning MLP with widths 6, 128, and 16 maps them to a window level affine scale and shift for the direct pressure waveform. The affine head is initialized at the identity.

% \FloatBarrier
\subsection{Evaluation and Uncertainty}
\label{sec:evaluation}
Primary outcomes are held-out time constant prediction error (against $\tauwave$) and ABP reconstruction error, with patient clustered confidence intervals. The CO calibration and coordinate perturbations are exploratory. Confidence intervals are percentile intervals from patient-level resampling. The full cohort time constant analysis uses 5000 bootstrap iterations. The separate consistency analysis in 100 patients uses 2000, and ABP reconstruction uses 2000.

For the time constant, the population baseline predicts the median across the training cohort of the mean $\log\tauwave$ within each patient for every test window. For ABP, patient-level waveform and beat pressure metrics are averaged across patients. For CO agreement and perturbation uncertainty, 2000 bootstrap replicates resample patients with replacement and recompute each metric. Appendix Section~\ref{app:evaluation_details} defines the estimands and the clustered bootstrap protocol.

All model selection and calibration choices use the training and validation partitions only. The test set is evaluated once after those choices are fixed. Bootstrap resampling over patients preserves the dependence among windows contributed by one patient. Intervals are computed for one fitted checkpoint and quantify patient sampling variation. They do not include optimization variability across training seeds.

\begin{wrapfigure}{r}{0.55\columnwidth}
    \vspace{-50pt}
    \centering
    \includegraphics[width=\linewidth]{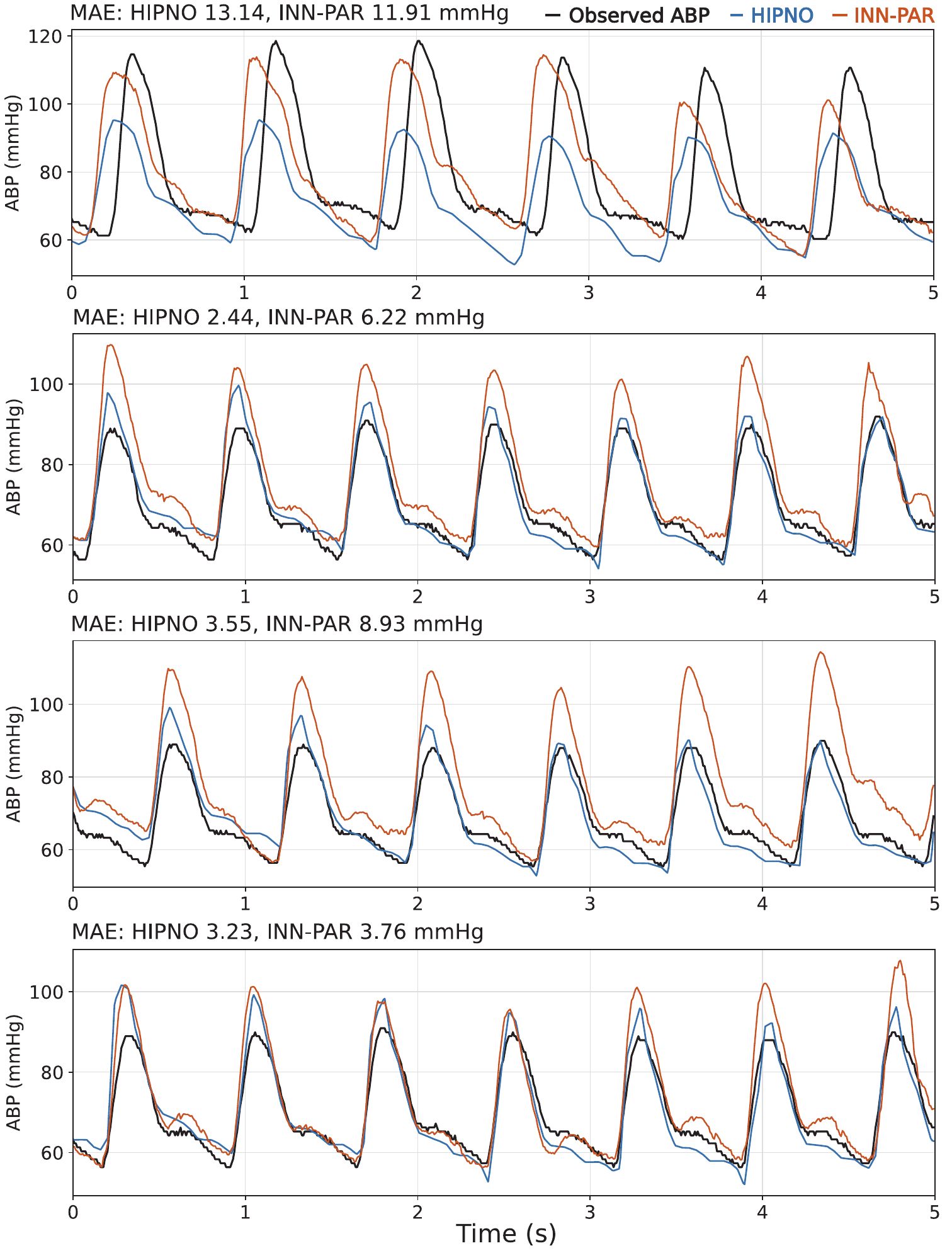}
    \caption{Illustrative 5s ABP excerpts from four held-out windows. Each panel
    overlays invasive ABP (black), HIPNO (blue), and INN-PAR (orange). Each
    title gives the per-window waveform MAE (mmHg).}
    \label{fig:waveform}
    \vspace{-50pt}
\end{wrapfigure}

\subsection{Comparison Protocol}
\label{sec:controlled_comparisons}

The evaluation is organized around the population baseline and external baseline methods. The training cohort median tests whether HIPNO predicts between-patient variation in the time constant beyond a population summary. For pressure reconstruction, we compare HIPNO with INN-PAR, a reversible waveform model that also uses signal gradients, whose invertible blocks jointly encode PPG and ABP \citep{kundu2025inn}, and PaPaGei, a foundation model for optical physiological signals whose PPG representations generalize across individuals \citep{pillai2410papagei}. These are the strongest reproducible baselines under our study split. The input modalities each method uses and their implementation details are given in Appendix Section~\ref{app:comparison_exclusions}.

The comparison is scored on pressure outputs. HIPNO's additional $\taumodel$, $U$, and $\kappa$ coordinates, monitor CO calibration, and coordinate perturbations are reported separately. BP-DeepONet-style physics-informed models lacked released code or checkpoints, while PPG2ABP and UniCardio exceeded the available compute budget. Both exclusions are documented in Appendix Section~\ref{app:comparison_exclusions}.

\section{Results}
\subsection{Held-Out Prediction of a Pressure-Derived Time Constant}
\label{sec:tau-consistency}
We compare HIPNO's predicted $\tauhat$ against the pressure-derived reference $\tauwave$ (Eq.~\ref{eq:tau_proxy}). Because the same proxy supervises training, this experiment characterizes prediction of $\tauwave$ in patients withheld from training. Resistance and compliance remain external physical quantities.

The test MAE in $\log\tauwave$ computed over windows falls from \TauBaselineTestMAE{} [0.408, 0.462] for the training median baseline to \TauTestMAE{} [0.276, 0.319], a 32\% relative reduction. RMSE falls from 0.525 [0.494, 0.555] to 0.382 [0.351, 0.414]. The predicted values $\tauhat$ capture structure that varies across patients with Pearson $r=\TauTestPearson{}$ [0.614, 0.734] and Spearman $\rho=0.674$ [0.615, 0.728]. Appendix Table~\ref{tab:tau_sensitivity} gives the corresponding reference range sensitivity analysis.

Estimates aggregated within patients are consistent with the result over windows. Among \TauTestPatients{} test patients, patient MAE is 0.237 s, absolute scale Pearson $r=0.672$, and logarithmic-scale Pearson $r=0.678$. Appendix Table~\ref{tab:tau_patient_agreement} gives the mean and agreement estimands, including the separate analysis in 100 patients with 2000 clustered bootstrap replicates.

Tightening the test reference interval to 0.32 to 2.20 s, 0.35 to 2.20 s, and 0.40 to 1.50s gives HIPNO MAE values of 0.287, 0.287, and 0.270. Correlation falls from 0.678 on the full range to 0.479 on the narrowest range as the variance of the target narrows.

\subsection{Mechanism Separation Responses in Hemodynamic Coordinates}
\label{sec:interventions}
\begin{wrapfigure}{r}{0.55\columnwidth}
    \vspace{-12pt}
    \centering
    \includegraphics[width=\linewidth]{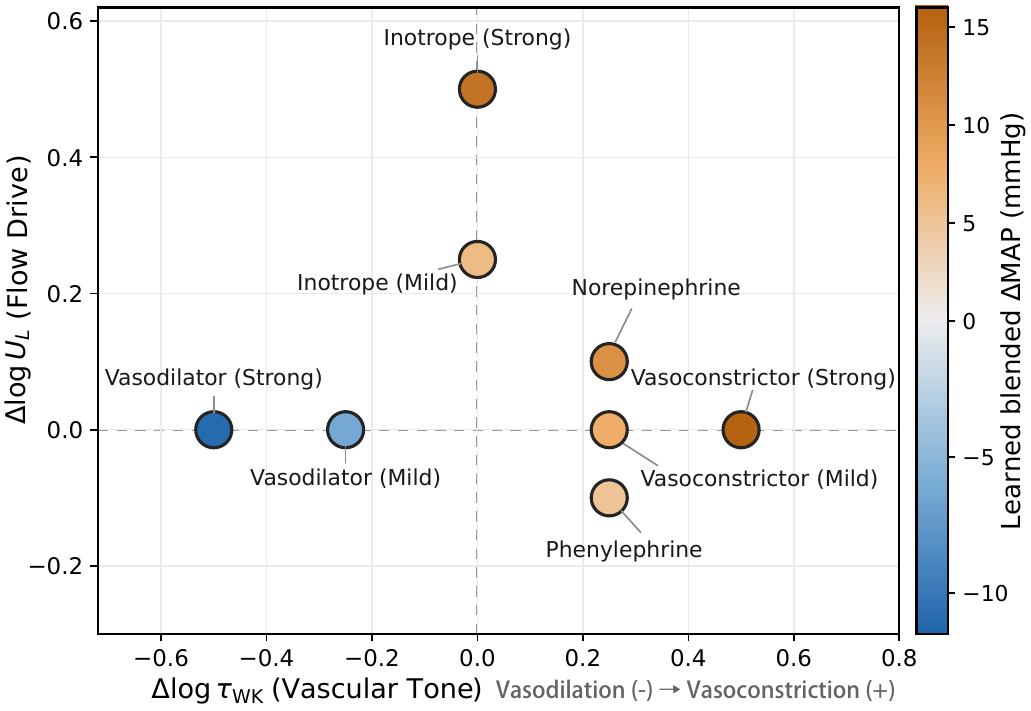}
    \caption{Mechanism separation in the quotient coordinates. Representative
    perturbations on the vascular tone ($\Delta\log\taumodel$) by flow drive
    ($\Delta\log U_L$) plane, colored by the learned blended $\Delta$MAP
    response from the Windkessel rollout.}
    \label{fig:mechanism_plane}
    \vspace{-25pt}
\end{wrapfigure}

The most direct quotient space test is mechanism separation, a capability no baseline restricted to pressure provides. In the context of hemodynamic management, interventions such as vasopressors and fluid boluses act on ABP-derived measurements by influencing vascular tone (see Appendix Table~\ref{tab:counterfactual_window}), suggesting directions we can assess for mechanistic consistency. Across the prespecified directions, the learned coordinates respond to changes in tone and flow drive with directional consistency of at least 90\% in all but two scenarios (full range 70.5-100\%). These counterfactuals do not estimate treatment effects or use medication exposure, dose, timing, or observed intervention response.

Each perturbation corresponds to a coordinate direction associated with a class of hemodynamic interventions in clinical practice: vasodilators, vasoconstrictive vasopressors, positive inotropes, preload manipulation, and agents of mixed mechanism such as norepinephrine and phenylephrine \citep{holmes2005vasoactive,overgaard2008inotropes,thiele2011clinical}. Appendix Table~\ref{tab:counterfactual_window} gives the full clinical class mapping and references.

We perturb $\taumodel$, $U_L$, and $P_v$ on the held-out CO test subset using 13 prespecified scenarios. The inputs, global representation, direct pressure branch, and excess pressure term remain fixed. For each scenario, we scale $U_L$ by $\exp(\Delta\log U)$, scale $\taumodel$ by $\exp(\Delta\log\taumodel)$, and add $\Delta P_v$, then rerun the Windkessel rollout after updating the derived resistance and capacitive pressure terms. We measure changes in the physics branch and in the final blended pressure.

For CO, each window's baseline prediction from the CO calibration model of Section~\ref{sec:co_validation} (Model~D) sets a fixed flow conversion, so $\Delta\mathrm{CO}=\widehat{\mathrm{CO}}_{D,\mathrm{base}}\{\exp(\Delta\log U)-1\}$. The fitted regression is not rerun and its pressure, $\tauhat$, demographic, and device features are not recomputed. Thus $\taumodel$ and $P_v$ perturbations leave CO fixed by construction, while changes in $U_L$ alter the frozen scale estimate. Because $\Delta$CO is imposed by this conversion, the informative outputs are the $\Delta$MAP and $\Delta$PP responses, which the learned flow field and Windkessel rollout produce rather than assume.

Directional consistency requires every prespecified output to match its expected nonzero sign. A zero response fails either sign. Appendix Table~\ref{tab:direction_rules} lists the scenario rules, and Table~\ref{tab:sensitivity_full} gives intervals from 2000 bootstrap replicates clustered by patient for the complete counterfactual summaries. The mapping to clinical classes and the pass rates across windows are reported separately in Appendix Table~\ref{tab:counterfactual_window}.

The responses are consistent with the imposed equations. Mean physics $\Delta$MAP is $-14.9$ and $-26.2$ mmHg under mild and strong vasodilation, versus $+18.2$ and $+37.1$ mmHg under mild and strong vasoconstriction. Vasoconstriction consistency is lower than vasodilation consistency (91.3\% versus 99.4 to 100\%). This asymmetry suggests that the learned pressure response is less constrained near the upper end of the $\taumodel$ range and warrants further investigation.

The mechanism separation test is most informative when the same tone change is paired with opposite flow drive changes (Figure~\ref{fig:mechanism_plane}), because this is the ambiguity clinicians face at the bedside. A fall in monitor CO can reflect either reduced flow drive (hemorrhage or cardiac failure) or increased vascular tone through compensatory constriction. The two have opposite treatment implications, calling for volume or inotropic support in the first case and afterload reduction in the second. A parameterization in the original $(\Qin,R_1,R_2,C)$ coordinates admits either interpretation along the pressure preserving scale family. In the quotient coordinates, with $\Delta\log\taumodel=+0.25$, raising $U_L$ increases CO by $+0.54$ L/min at 100\% directional consistency, while lowering $U_L$ decreases it by $-0.49$ L/min at 90.5\%. The change is assigned to a tone or a flow axis before an intervention is chosen.
Venous pressure responses follow algebraically from the imposed equations and the $\taumodel$ and $U_L$ responses propagate through the learned flow field.

\subsection{Pressure Reconstruction Cost and Branch Attribution}
\label{sec:reconstruction}

Having established that the coordinates respond as separate mechanistic axes, we next quantify the pressure reconstruction cost of additionally producing the quotient coordinates. HIPNO matches the mean pressure targets that most directly inform clinical decisions, and its waveform cost is limited in magnitude and concentrated in systolic peaks (Table~\ref{tab:reconstruction}).

For the matched external comparison, we evaluate HIPNO against PaPaGei \citep{pillai2410papagei} and INN-PAR \citep{kundu2025inn} on the benchmark subset shared with the CO analysis. Table~\ref{tab:reconstruction} contrasts the outputs each method produces with its pressure fidelity cost. HIPNO reconstructs pressure while also producing the quotient coordinates ($\taumodel$, $U$, $\kappa$) and the mechanism separation analysis, whereas INN-PAR is optimized for reversible waveform reconstruction and PaPaGei for pressure estimates from a pretrained foundation model. INN-PAR was retrained on the study training split. The public PaPaGei checkpoint was pretrained on VitalDB, so patient-level overlap with our test cohort may be present.

\begin{table*}[t]
    \centering

    % ==================== Left table ====================
    \begin{minipage}[t]{0.44\textwidth}
        \centering
        \small
        \setlength{\tabcolsep}{3.5pt}

        \captionof{table}{
        Capability and cost comparison on the held-out benchmark subset.
        Check marks denote capabilities a method provides.
        MAE is in mmHg (lower better) and Wave $r$ is waveform correlation
        (higher better).
        }
        \label{tab:reconstruction}

        \begin{tabular}{@{}lccccc@{}}
            \toprule
            & \multicolumn{2}{c}{Outputs \& analyses}
            & \multicolumn{3}{c}{Cost} \\
            \cmidrule(lr){2-3}\cmidrule(l){4-6}
            Method
            & \makecell{ABP\\wave}
            & \makecell{Mechanism\\separation}
            & \makecell{MAP\\MAE\,$\downarrow$}
            & \makecell{SBP\\MAE\,$\downarrow$}
            & \makecell{Wave\\$r$\,$\uparrow$} \\
            \midrule
            PaPaGei & scalar      & ---        & 9.01  & 12.38 & ---   \\
            INN-PAR & \checkmark  & ---        & 10.34 & 16.32 & 0.817 \\
            HIPNO   & \checkmark  & \checkmark & 10.88 & 20.15 & 0.672 \\
            \bottomrule
        \end{tabular}
    \end{minipage}
    \hfill
    % ==================== Right table ====================
    \begin{minipage}[t]{0.54\textwidth}
        \centering
        \small
        \setlength{\tabcolsep}{2pt}

        \captionof{table}{
        Monitor CO calibration ablation.
        Test MAE is in L/min and brackets give 95\% patient bootstrap intervals.
        }
        \label{tab:co_calibration}

        \textbf{Test error and association}\par\vspace{2pt}

        \begin{tabularx}{\linewidth}{@{}cXcc@{}}
            \toprule
            Model & Feature set & Test MAE & Pearson $r$ \\
            \midrule
            A & HR, MAP, PP, HIPNO $\tauhat$
              & \makecell{1.654\\{\scriptsize[1.393, 1.927]}}
              & \makecell{0.368\\{\scriptsize[0.183, 0.536]}} \\

            B & A plus EHR covariates
              & \makecell{1.117\\{\scriptsize[0.934, 1.296]}}
              & \makecell{0.777\\{\scriptsize[0.680, 0.853]}} \\

            C & B plus device
              & \makecell{1.096\\{\scriptsize[0.919, 1.270]}}
              & \makecell{0.788\\{\scriptsize[0.692, 0.862]}} \\

            D & C plus $\log\overline U_L$
              & \makecell{1.095\\{\scriptsize[0.914, 1.270]}}
              & \makecell{0.787\\{\scriptsize[0.693, 0.860]}} \\

            E & D features, patient median Huber fit
              & \makecell{1.059\\{\scriptsize[0.875, 1.239]}}
              & \makecell{0.796\\{\scriptsize[0.702, 0.868]}} \\
            \bottomrule
        \end{tabularx}
    \end{minipage}

\end{table*}

% \begin{table}[t]
% \centering
% \small
% \setlength{\tabcolsep}{3.5pt}
% \caption{Capability and cost comparison on the held-out benchmark subset. Check marks denote capabilities a method provides. MAE is in mmHg (lower better) and Wave $r$ is waveform correlation (higher better).}
% \label{tab:reconstruction}
% \begin{tabular}{@{}lccccc@{}}
%     \toprule
%     & \multicolumn{2}{c}{Outputs \& analyses} & \multicolumn{3}{c}{Cost} \\
%     \cmidrule(lr){2-3}\cmidrule(l){4-6}
%     Method & \makecell{ABP\\wave} & \makecell{Mechanism\\separation} & \makecell{MAP\\MAE\,$\downarrow$} & \makecell{SBP\\MAE\,$\downarrow$} & \makecell{Wave\\$r$\,$\uparrow$} \\
%     \midrule
%     PaPaGei & scalar     & ---        & 9.01  & 12.38 & --- \\
%     INN-PAR & \checkmark & ---        & 10.34 & 16.32 & 0.817 \\
%     HIPNO   & \checkmark & \checkmark & 10.88 & 20.15 & 0.672 \\
%     \bottomrule
% \end{tabular}
% \end{table}

The MAP cost is bounded. HIPNO stays within 0.54 mmHg of INN-PAR (10.88 versus 10.34 mmHg), while the waveform penalty concentrates in systolic morphology (SBP MAE 20.15 versus 16.32 mmHg, waveform correlation 0.672 versus 0.817). Exposing $\taumodel$ and $U$ enables the calibration against monitor CO and the tone/flow perturbations reported above.

Figure~\ref{fig:waveform} illustrates this phase-specific error. HIPNO rounds systolic peaks while preserving beat timing and diastolic recovery, whereas INN-PAR often overshoots peaks or tails. This explains why a larger waveform or SBP error can accompany a small MAP penalty, since localized peak deviations contribute less to a cycle average than a sustained level or tail offset.

Computing metrics for each branch in isolation makes the source of the blended output explicit. On the same benchmark, the direct branch has waveform MAE 21.86 mmHg and Pearson $r=0.699$. The physics branch has waveform MAE 24.61 mmHg and near-zero waveform correlation ($r=0.002$, Appendix Table~\ref{tab:branch_metrics_full}). Because the blend is $P=\alpha P_{\mathrm{dir}}+(1-\alpha)P_{\mathrm{phy}}$, the $P_{\mathrm{phy}}$ column is also the $\alpha=0$ output. The physics branch enforces the Windkessel and PDE constraints and shapes the shared representation. Its standalone $r\approx0$ is by design, and the blend is the evaluated output.

\subsection{Monitored Cardiac Output Calibration}
\label{sec:co_validation}

Following the supervised calibration protocol in Section~\ref{sec:data}, the analysis joins monitor CO to frozen operator outputs one to one by subject, record, and clean window order. We fit nested log-linear calibration models using predicted MAP, predicted pulse pressure, ECG-derived heart rate, HIPNO's $\tauhat$, age, sex, BSA, device source, and mean distal latent flow. Model E replaces the ridge fit weighted across windows with a Huber fit to the median within each patient. Appendix Section~\ref{app:co_protocol} gives the feature transformations, candidate grids, weighting, cross validation, and the details of refits in which the target device is held out.

% \begin{table}[t]
%     \centering
%     \small
%     \setlength{\tabcolsep}{2pt}
%     \caption{Monitor CO calibration ablation. Test MAE is in L/min and brackets give 95\% patient bootstrap intervals.}
%     \label{tab:co_calibration}
%     \textbf{Test error and association}\par\vspace{2pt}
%     \begin{tabularx}{\columnwidth}{@{}cXcc@{}}
%         \toprule
%         Model & Feature set & Test MAE & Pearson $r$ \\
%         \midrule
%         A & HR, MAP, PP, HIPNO $\tauhat$ & \makecell{1.654\\{\scriptsize[1.393, 1.927]}} & \makecell{0.368\\{\scriptsize[0.183, 0.536]}} \\
%         B & A plus EHR covariates & \makecell{1.117\\{\scriptsize[0.934, 1.296]}} & \makecell{0.777\\{\scriptsize[0.680, 0.853]}} \\
%         C & B plus device & \makecell{1.096\\{\scriptsize[0.919, 1.270]}} & \makecell{0.788\\{\scriptsize[0.692, 0.862]}} \\
%         D & C plus $\log\overline U_L$ & \makecell{1.095\\{\scriptsize[0.914, 1.270]}} & \makecell{0.787\\{\scriptsize[0.693, 0.860]}} \\
%         E & D features, patient median Huber fit & \makecell{1.059\\{\scriptsize[0.875, 1.239]}} & \makecell{0.796\\{\scriptsize[0.702, 0.868]}} \\
%         \bottomrule
%     \end{tabularx}

% \end{table}

Adding EHR covariates accounts for essentially all of the improvement. Test MAE falls by 0.537 L/min from Model A (1.654) to Model B (1.117), whereas adding device source and HIPNO's flow coordinate $\log\overline U_L$ each change MAE by at most 0.021 L/min (the $\log\overline U_L$ term moves MAE by 0.001). Model E reaches validation MAE 0.762 and test MAE \COTestMAE{} L/min with test Pearson $r=0.796$ and Spearman $\rho=0.771$. No test set information was used for model selection.

HIPNO's coordinates can be included in the calibration model without degrading it, but do not by themselves improve upon the clinical covariates. Thus, the experiment tests the feasibility of monitor CO calibration. A static model using demographics alone is not the relevant comparator in the intraoperative setting, where anesthesia, ventilation, positioning, blood loss, and vasoactive drugs move flow far from any resting anthropometric estimate \citep{king2023physiology,teboul2016less}.

The Bland-Altman statistics defined in Section~\ref{sec:evaluation} are summarized for Model E in the text and given for both nested calibration models in Appendix Table~\ref{tab:co_calibration_agreement}. Model E has bias $-0.358$ L/min and percentage error \COTestPE{}, which exceeds the conventional 30\% agreement criterion for comparison with a reference CO method \citep{critchley1999meta}. The delay-aware transfer table is the primary device-specific analysis.

We additionally evaluated transfer using a forward matching window chosen separately for each device from each monitor's update mechanism. CardioQ's esophageal Doppler signal updates on every beat, so a 1s window allows only minimal hardware and integration latency \citep{singer1993esophageal}. EV1000/FloTrac integrates the pulse contour over a rolling window of 20s \citep{saugel2021cardiac}. Vigilance continuous thermodilution was evaluated with 3 min (STAT) and 7 min (standard) windows because stochastic heating and cross-correlation produce a delayed update \citep{yelderman1990continuous,siegel1996delayed}.

Under this protocol, Model E has PE 57.5\% for CardioQ, 44.5\% on the EV1000 test split, and 30.6\% and 27.9\% for Vigilance at 3 and 7 min, respectively. For CardioQ and Vigilance, the target device was excluded from calibration model fitting and its train, validation, and test observations were pooled into combined evaluation cohorts (21 and 3 patients, respectively). The complete transfer table (Appendix Table~\ref{tab:delay_device_transfer}) and its column definitions are in Appendix Section~\ref{app:delay_matching}.

\section{Discussion}
\label{sec:discussion}

% \textbf{Clinical significance.} Our results suggest that HIPNO can enable continuous, non-invasive hemodynamic monitoring. While non-invasive ABP estimation from surrogate waveforms has been extensively explored, HIPNO leverages the known coupling between pressure and flow to noninvasively and continuously track cardiac output and vascular tone, which has previously remained out of reach. We validate these findings in three contexts. 
% First, it maintains high-fidelity mean arterial pressure (MAP) reconstruction (Table~\ref{tab:reconstruction}), ensuring agreement with the clinical gold standard linked to perioperative acute kidney injury, myocardial injury, and mortality \citep{sessler2019poqi}. Second, HIPNO continuously tracks the vascular decay constant ($\taumodel$) from routine non-invasive waveforms, providing a real-time index of arterial compliance and systemic vascular tone \citep{bikia2024novel,westerhof2009arterial} that previously required invasive arterial lines. Third, by decoupling cardiac flow drive ($U$) from vascular tone, HIPNO provides the mechanistic insight necessary to distinguish hypovolemia from vasodilation, directly informing targeted fluid administration versus vasopressor titration during acute hemodynamic instability (Section~\ref{sec:interventions}).

\textbf{Clinical significance.} Noninvasive arterial pressure reconstruction is
well studied \citep{ibtehaz2022ppg2abp,kundu2025inn,chen2026versatile}, but
cardiac flow drive and vascular tone still require catheterization to observe.
HIPNO recovers both from ECG and PPG by using the known physical coupling
between pressure and flow. The decay constant $\taumodel$ provides a continuous
index of arterial compliance and vascular tone
\citep{bikia2024novel,westerhof2009arterial} that otherwise requires an arterial
line. Separating $\taumodel$ from flow drive $U$ distinguishes hypovolemia from
vasodilation, the distinction that determines whether volume or a vasopressor is
indicated (Section~\ref{sec:interventions}). HIPNO retains mean arterial
pressure (MAP) accuracy (Table~\ref{tab:reconstruction}), the quantity most
directly linked to perioperative acute kidney injury, myocardial injury, and
mortality \citep{sessler2019poqi}.

Compared to PaPaGei and INN-PAR, which target the pressure waveform alone, HIPNO also
recovers the invariant coordinates $\taumodel$, $U$, and $\kappa$ and makes the
residual scale ambiguity explicit. More generally, mechanistic symmetries should
be quotiented out before latent states are learned. The construction of
Section~\ref{sec:methods} instantiates this principle in the four steps below.

\begin{center}
\fbox{\parbox{\dimexpr\columnwidth-2\fboxsep-2\fboxrule\relax}{%
\textbf{Symmetry-aware operator design}\par\vspace{2pt}\hrule\vspace{4pt}
(1)~Write the observation map and identify its symmetry group $G$. For the Windkessel model, Proposition~\ref{prop:scale}
gives the pressure-preserving scaling of $(\Qin,R_1,R_2,C)$ (Sections~\ref{sec:formulation} and~\ref{sec:identifiability}).\\[3pt]
(2)~Form invariant coordinates that quotient out $G$. Here $U=Q/C$, $\taumodel=R_2C$, and $\kappa=R_1C$ (Section~\ref{sec:identifiability}).\\[3pt]
(3)~Parameterize and train the operator in those coordinates with predictive and residual losses. Here a physics-conditioned trunk with staged supervision (Sections~\ref{sec:architecture} and~\ref{sec:training}).\\[3pt]
(4)~Identify the external measurement that fixes the remaining scale convention. Here a compliance or flow reference (Measurement prescription, below).%
}}
\end{center}

Steps (1) to (3) follow directly once $G$ is known, and step (4) is problem
specific. The same construction applies beyond hemodynamics. In compartmental
pharmacokinetics, a concentration-time course determines the elimination rate,
which is the ratio of clearance to volume of distribution and serves as the
invariant coordinate. In subsurface Darcy flow, permeability and viscosity enter
the governing equations only through their ratio, so mobility is the coordinate
recoverable from an inverse problem in pressure and flow \citep{Oliver_Reynolds_Liu_2008}.

% \textbf{Transfer under matched update delays.} Matching each monitor to its own update mechanism exposes behavior specific to each device that a uniform window would blur. The larger EV1000 cohort transfers best, while the small CardioQ and Vigilance cohorts remain exploratory (Section~\ref{sec:co_validation}, Appendix Table~\ref{tab:delay_device_transfer}).
% % 

\textbf{Measurement prescription.} Device-specific transfer analyses demonstrate that establishing an absolute scale requires both an external reference and temporal alignment with its acquisition mechanism. Without this alignment, monitor latency and temporal integration risk being conflated with model error. While a compliance reference establishes the scale mapping normalized inputs to volumetric flow and dual resistances, an independent flow reference calibrates this scale directly as illustrated by the monitor CO analysis in Section~\ref{sec:co_validation}. Supported by the largest cohort, the EV1000 results yield the most robust evidence for transferability, whereas CardioQ and Vigilance estimates remain preliminary (Appendix Table~\ref{tab:delay_device_transfer}). Echocardiographic stroke volume serves as an additional candidate reference. Although anthropometrics can provide a fixed reference point for this calibration, tracking dynamic intraoperative changes requires the high-frequency signals HIPNO already uses. Therefore, future prospective studies should pair noninvasive inputs with temporally matched scale references and clinically meaningful intervention labels.

\section{Limitations and Ethical Considerations}
Two limitations constrain what these results establish. First, the evidence is
retrospective. VitalDB comprises surgical patients at a single tertiary care
center selected by acuity and device availability. Window eligibility and the
reference labels both derive from invasive signals that the target deployment
settings lack. Generalization to less acute populations, unrepresented cohorts,
and other sensor hardware therefore requires external prospective evaluation
with recalibration before deployment. Second, the 3-element Windkessel model
captures core arterial dynamics but omits venous return, ventricular
interaction, distributed wave reflections, and autonomic regulation.

Deployment as an uncalibrated diagnostic in high-acuity care risks clinical
misinterpretation, such as misjudged fluid responsiveness. Translation will
therefore require signal-quality-based rejection, uncertainty thresholds that
gate clinical action, and failure analysis in the high-risk states where
inference matters most (arrhythmia, low perfusion, vasoactive support).
Evaluation should extend beyond agreement with monitor labels to actionable
outcomes such as acute kidney injury, myocardial injury, and mortality. Pending such evidence, the model is intended for research use only.

VitalDB acquisition and public release were approved by the Institutional Review
Board of Seoul National University Hospital (H-1408-101-605), with consent
waived for retrospective release of the anonymized records. These data are
distributed under CC-BY 4.0 \citep{lee2022vitaldb}. We used them within those
terms and made no attempt to re-identify patients.

\section{Generative AI Usage}

Generative AI tools were used for language editing and coding assistance. They were not used to generate experimental data or for any other purposes.

\section{Data and Code Availability}
VitalDB is publicly available under its stated access terms \citep{lee2022vitaldb}. Our code is released in the Github repository shown in the first page. Other artifacts will be released upon acceptance of this paper.

\section{Conclusion}

HIPNO demonstrates how a physics-informed neural operator can learn in the identifiable quotient space of a mechanistic observation model. In hemodynamics, that choice yields prediction of between-patient variation in the decay constant, functionally separated tone and flow drive responses, and a limited pressure cost while leaving the missing absolute scale explicit. A compliance or independent flow reference is the concrete next measurement needed to turn these invariant coordinates into absolute variables. Clinically, HIPNO recovers vascular tone and flow drive from ECG and PPG alone, a step toward mechanistic hemodynamic insight for patients who never receive an invasive line.

%% Example review annotations (removed with [final] option)
%%   \alice{comment}       — plain comment
%%   \alice[r]{text}       — strikethrough (suggest removal)
%%   \alice[h]{text}       — highlight (draw attention)
%\alice{This paragraph needs more detail on the experimental setup.}
%\alice[r]{This sentence is redundant and should be removed.}
%\alice[h]{This claim needs a citation.}
%\bob{Looks good to me.}

\bibliographystyle{plainnat}
\bibliography{references}

%% ============================================================
%% Appendix
%% ============================================================
%% Start appendix with centered title and table of contents
%% Default title: "Supplementary Materials"; customize with \makeappendixtitle[Your Title]
\makeappendixtitle

\appendix
\clearpage
\onecolumn

\section{Windkessel and Arterial Dynamics}
\label{app:pde}

Our arterial model uses a normalized compliant segment $x\in[0,1]$ and the compliance-normalized flow $U$. Its residual equations are
\begin{equation}
    \frac{\partial A}{\partial t}+\frac{\partial U}{\partial x}=0,
\end{equation}
\begin{equation}
    \frac{\partial U}{\partial t}
    +\frac{\partial}{\partial x}\left(\alpha_{\mathrm c}\frac{U^2}{A}\right)
    +c_p A\frac{\partial P_{\mathrm{tm}}}{\partial x}
    +C_f\frac{U}{A}=0,
\end{equation}
where $\alpha_{\mathrm c}=1.1$ is fixed and $c_p>0$ and $C_f>0$ are learned coefficients. This is the residual our formulation enforces. It is not an absolute flow equation with measured vessel geometry. The tube law is
\begin{equation}
    P_{\mathrm{tm}}(t,x)=P_{\mathrm{ext}}
    +\beta\left(\sqrt{A/A_0}-1\right),
    \label{eq:tube_law}
\end{equation}
where $A_0=1$ is fixed and $P_{\mathrm{ext}}$ and $\beta>0$ are learned \citep{sherwin2003computational}. In the trained model, the learned external pressure settles near 70 mmHg (a physiological transmural offset). The remaining coefficients are positive and stable ($\beta\approx3.9$, $c_p\approx1.70$, $C_f\approx2.8\times10^{-3}$, normalized). The field output uses 16 uniform spatial locations. The residual uses 24 phase locations and six uniform spatial locations. Centered periodic differences approximate time derivatives. Centered interior and one sided boundary differences approximate spatial derivatives. Huber losses exclude the two spatial boundary nodes.

At the inlet, $U(t,0)$ is fixed to a smooth positive pulse whose onset, width, sharpness, shoulder, and floor are predicted from the window representation. The field output is shifted so this condition holds exactly. At the outlet, the Windkessel update in Eq.~\ref{eq:wk_discrete} supplies the pressure constraint. The operator has no separate initial condition for $A$ or $U$ because it is evaluated on a periodic phase representation. The learned $P_C(0)$ and the periodic capacitive pressure loss close the window level Windkessel rollout.

\section{Architecture and Training Environment}
\label{app:architecture}

For each modality $m\in\{\mathrm{ECG},\mathrm{PPG}\}$, the two convolutional layers have 128 channels, kernels 15 and 3, and unit stride. Two pre-norm Transformer layers use model dimension 128, feed forward dimension 256, eight attention heads, GELU, and zero dropout. The two modalities do not share weights. The decoder uses independent eight head ECG and PPG cross-attention modules. Its local fusion maps 384 to 256 to 64 dimensions. Its global fusion maps 512 to 256 to 64 dimensions.

The trunk input has dimension 37 and maps to width 256. It receives 12 phase harmonics and six spatial Fourier frequencies, but not normalized progress or local decoder context. Each of eight residual blocks applies layer normalization, a 256 to 1024 to 256 MLP, GELU, and no dropout. FiLM uses
\begin{equation}
    \operatorname{FiLM}(h,\mathbf z_w)=\{1+\gamma(\mathbf z_w)\}\odot h+\delta(\mathbf z_w),
\end{equation}
with independent linear maps for $\gamma$ and $\delta$, both initialized at zero. FiLM is applied before the residual blocks and after final layer normalization. The model has 5.70M trainable parameters.

The direct branch receives 10 phase harmonics, normalized progress, the local decoder context, $\mathbf z_w$, and the cuff embedding. Its MLP widths are 256, 256, and 128, and its output bias is initialized at 80 mmHg. The learned cuff affine map then adjusts scale and offset. The PDE residual is evaluated on 24 phase points and six spatial points using centered differences, periodic differences in phase, and interior residuals in space.

The Windkessel head maps the 64-dimensional global representation through a 128-unit GELU layer to $\kappa$, $\taumodel$, and $P_v$. Compliance is fixed to one by convention. $\kappa$ is bounded to $[0.02,0.15]$, $\taumodel$ to $[0.30,2.50]$ s, and $P_v$ to $[2,20]$ mmHg. The $P_v$ soft prior is centered at 5 mmHg with scale 3 mmHg and maximum weight 0.10. The global pressure correction coefficient $\eta$ is regularized toward 0.25 and has selected value 0.254. The exact ODE update, phase loss, loss weights, and curriculum are given in Sections~\ref{sec:architecture} and \ref{sec:training}. 

Cuff features are standardized as $(x-120)/40$ for SBP, $(x-70)/25$ for DBP, $(x-90)/30$ for MAP, and $(x-50)/30$ for pulse pressure, and the cuff time offset is divided by 600 s. We train with AdamW at learning rate $10^{-4}$, weight decay $10^{-5}$, and gradient norm clipped at one, using a 3000-step linear warmup followed by cosine decay to $10^{-6}$. The batch size is 16 and training runs for 140 epochs at 9000 steps per epoch, or 1260000 scheduled steps, in bfloat16 mixed precision with an exponential moving average decay of 0.995. The four-stage curriculum covers phase learning in epochs 1 to 8, direct waveform learning in epochs 9 to 60, physics activation with a decreasing $\alpha$ in epochs 61 to 120, and joint refinement in epochs 121 to 140. The selected checkpoint blends the two branches with the global weights $\alpha=0.5763$ (direct) and $1-\alpha=0.4237$ (physics). With random seed 42, we select epoch 124, which attains the lowest validation $\log\tau_{\mathrm{wave}}$ mean absolute error, 0.3347 over the 8000 validation windows evaluated each epoch. All reported HIPNO results use the exponential moving average weights at that epoch. Training uses two GPUs with PyTorch DistributedDataParallel and takes approximately 109 hours of wall time.

\section{Comparison Scope}
\label{app:comparison_exclusions}

The two included baselines provide complementary points of comparison for pressure prediction. INN-PAR represents reversible and gradient aware waveform reconstruction, whereas PaPaGei represents cross-individual foundation model pretraining for optical signals. External baselines were included only when an official implementation or checkpoint could be run under the study split. INN-PAR was trained with its released pipeline after resampling PPG and ABP to 125 Hz, using nonoverlapping 5s windows, PPG and its gradient as inputs, Adam at $10^{-4}$, batch size 128, 501 epochs, and seed 1. PaPaGei uses 10s PPG segments at 125 Hz, a frozen 512-dimensional encoder (5.79M parameters), separate ridge heads for SBP, DBP, and MAP, and the alpha grid $\{0.1,1,10,100,1000\}$ with four fold selection. BP-DeepONet-like models lacked released code or checkpoints, so their weights and hyperparameters could not be reproduced reliably. The compute required for larger systems such as PPG2ABP and UniCardio exceeded one month per retraining run, beyond the feasible evaluation budget. Thus, these methods were not included in the empirical comparison.

\begin{table}[H]
    \centering
    \small
    \setlength{\tabcolsep}{3pt}
    \caption{Complete branch isolated pressure metrics on the held-out benchmark. Errors are in mmHg.}
    \label{tab:branch_metrics_full}
    \begin{tabularx}{\textwidth}{@{}l>{\centering\arraybackslash}X>{\centering\arraybackslash}X@{}}
        \toprule
        Metric & $P_{\mathrm{dir}}$ (direct) & $P_{\mathrm{phy}}$ (physics) \\
        \midrule
        Waveform MAE & 21.86 & 24.61 \\
        Waveform RMSE & 26.43 & 30.76 \\
        Waveform Pearson $r$ & 0.699 & 0.002 \\
        SBP MAE & 40.16 & 43.86 \\
        DBP MAE & 11.88 & 28.22 \\
        MAP MAE & 17.55 & 20.36 \\
        SBP Pearson $r$ & 0.660 & 0.275 \\
        DBP Pearson $r$ & 0.640 & $-0.178$ \\
        MAP Pearson $r$ & 0.713 & $-0.145$ \\
        \bottomrule
    \end{tabularx}
\end{table}

\begin{table}[H]
    \centering
    \small
    \setlength{\tabcolsep}{4pt}
    \caption{Bland-Altman agreement for nested CO calibration models. Bias and limits of agreement (LoA) in L/min. Brackets give 95\% patient-clustered bootstrap intervals.}
    \label{tab:co_calibration_agreement}
    \begin{tabularx}{\textwidth}{@{}l>{\centering\arraybackslash}X>{\centering\arraybackslash}X>{\centering\arraybackslash}X>{\centering\arraybackslash}X@{}}
        \toprule
        Model & Bias & Lower LoA & Upper LoA & PE \\
        \midrule
        D
          & \makecell{$-0.428$\\{\scriptsize[$-0.673$, $-0.188$]}}
          & \makecell{$-2.964$\\{\scriptsize[$-3.456$, $-2.402$]}}
          & \makecell{2.109\\{\scriptsize[1.858, 2.325]}}
          & \makecell{47.8\%\\{\scriptsize[41.5, 53.2]}} \\
        E
          & \makecell{$-0.358$\\{\scriptsize[$-0.592$, $-0.133$]}}
          & \makecell{$-2.874$\\{\scriptsize[$-3.338$, $-2.351$]}}
          & \makecell{2.158\\{\scriptsize[1.907, 2.383]}}
          & \makecell{\COTestPE\\{\scriptsize[41.2, 52.2]}} \\
        \bottomrule
    \end{tabularx}
\end{table}

\section{Estimands and Bootstrap Uncertainty}
\label{app:evaluation_details}

For the window level time constant, point estimates pool all valid windows without patient reweighting, and percentile intervals resample patients. For the complementary patient aggregated estimand, patient $p$ with $n_p$ valid windows has within patient means $\overline{\widehat\tau}_p=n_p^{-1}\sum_i\widehat\tau_{pi}$ and $\overline\tau_p=n_p^{-1}\sum_i\tau_{pi}$, where $\widehat\tau_{pi}$ is the prediction $\tauhat$ and $\tau_{pi}$ the target $\tauwave$. Each patient then receives equal weight. Mean predicted $\tauhat$ and observed $\tauwave$ are averaged across patients, patient MAE and RMSE summarize $\overline{\widehat\tau}_p-\overline\tau_p$, log-MAE summarizes $|\log\overline{\widehat\tau}_p-\log\overline\tau_p|$, and Pearson $r$ and Spearman $\rho$ are computed across patient-level pairs. All patient aggregated time constant intervals use this patient clustered bootstrap. Patient identifiers are sampled with replacement and all eligible windows from each selected patient are retained with that patient's multiplicity. The full cohort tables use 5000 replicates. The dedicated 100-patient consistency analysis uses 2000 replicates.

For ABP reconstruction, all valid waveform samples for one patient are pooled to compute that patient's MAE, RMSE, and correlation. Beat level SBP, DBP, and MAP errors are pooled within patient. The point estimate is the mean of the resulting patient metrics.

For CO agreement and perturbation uncertainty, each bootstrap replicate samples $P$ patient identifiers with replacement, where $P$ is the number of patients in the evaluated split. A selected patient contributes all eligible windows, repeated according to that patient's bootstrap multiplicity. Bland-Altman bias, lower and upper limits of agreement, and bias-corrected percentage error are recomputed from this clustered sample. Counterfactual replicates similarly recompute mean physics and blended MAP and pulse pressure changes, mean CO change, output specific direction rates, and consistency across all prespecified directions. Appendix Table~\ref{tab:direction_rules} lists the directional consistency rules. The 95\% interval is the empirical 2.5th to 97.5th percentile of the replicate values.

\section{Cardiac Output Calibration Procedure}
\label{app:co_protocol}

Let $\mathbf z_i$ contain the feature set selected for calibration model $i$. Models A to D fit
\begin{equation}
    \log \widehat{\mathrm{CO}}_i=\beta_0+\boldsymbol\beta^\top\mathbf z_i,
    \quad
    \widehat{\boldsymbol\beta}
    =\arg\min_{\boldsymbol\beta}\sum_j w_j e_j^2
    +\lambda\lVert\boldsymbol\beta\rVert_2^2.
\end{equation}
Here $e_j=\log\mathrm{CO}_j-(\beta_0+\boldsymbol\beta^\top\mathbf z_j)$ is the residual on training window $j$. All positive continuous features other than age and sex are included on the log scale. Features are standardized with training means and standard deviations. Constant features are dropped, and rows with missing features are excluded without imputation. The weights $w_j$ give every training patient equal total weight regardless of window count. Ridge candidates are $\lambda\in\{0, 0.001, 0.01, 0.1, 1, 10, 100\}$, selected by five fold subject grouped cross validation on patient-level MAE.

Model E first takes the median of every feature and the CO reference within each training patient. It fits a penalized Huber regression on log CO with $\lambda_H\in\{0.001, 0.01, 0.1, 1, 10, 100\}$ and $\epsilon_H\in\{1.35, 1.5, 2.0\}$, selected by patient-level MAE. At evaluation, a prediction is produced for each window, and patient-level metrics compare the within patient median prediction with the within patient median reference. Device categories are learned from training data, one category is omitted as reference, and unseen test categories cause an error. For refits on held-out target devices, the target device is designated as the omitted reference category, so no target specific coefficient is estimated from absent target device examples.

\section{Training Losses and Supervision}
\label{app:training_losses}

The waveform reconstruction loss is
\begin{equation}
\begin{split}
\mathcal L_{\mathrm{rec}}={}\mathcal L_{\mathrm{abs}}+0.25\mathcal L_z
+2\mathcal L_{\partial t}+0.5\mathcal L_{\mathrm{corr}}+0.5\mathcal L_{\mathrm{beat}}+6.5\mathcal L_{\mathrm{stat}}.
\end{split}
\label{eq:reconstruction_loss}
\end{equation}
$\mathcal L_{\mathrm{abs}}$ is masked MSE in mmHg. $\mathcal L_z$ is masked Huber loss after standardization by the true window mean and standard deviation. $\mathcal L_{\partial t}$ is masked derivative MSE. The remaining terms match correlation, beat level SBP/DBP, and window level SBP/DBP/MAP. Soft extrema use temperature 2 mmHg. For the final blend, branch gradients scale as
\begin{equation}
    \frac{\partial\mathcal L}{\partial P_{\mathrm{dir}}}
    =\alpha\frac{\partial\mathcal L}{\partial P},\quad
    \frac{\partial\mathcal L}{\partial P_{\mathrm{phy}}}
    =(1-\alpha)\frac{\partial\mathcal L}{\partial P}.
\end{equation}

\begin{table}[H]
    \centering
    \small
    \setlength{\tabcolsep}{4.5pt}
    \caption{Loss groups in the final trained model. Inner weights multiply terms before the scheduled group weights.}
    \label{tab:supervision}
    \begin{tabularx}{\textwidth}{@{}p{0.95in}Xp{0.9in}p{1.55in}@{}}
        \toprule
        Group & Terms and inner weights & \makecell{Maximum\\group weight} & Source \\
        \midrule
        Reconstruction & Eq.~\ref{eq:reconstruction_loss} & 1.00 & Invasive ABP \\
        Distillation & Frozen waveform teacher & 2.00 & Model prediction \\
        Physics & Coupling 10, WK 0.1, mass 0.1, PDE 0.05, total resistance 0.5, periodicity 0.1, flow smoothness 0.12, latent smoothness 0.05, proximal pressure 0.25 & 0.20 & Equations and ABP \\
        Parameters & Prior 0.75, within subject consistency 0.05 & 0.05 & Bounds and priors \\
        Cuff & Soft SBP, DBP, MAP & 0.03 & NIBP \\
        Time constant & Direct log MSE 1.0, morphology 0.5 & 1.00, 0.50 & ABP-derived $\tau_{\mathrm{wave}}$ \\
        Scale conventions & $P_v$ soft prior, log $S_U$ soft prior & 0.10 each & Fixed prior centers \\
        \bottomrule
    \end{tabularx}
\end{table}

\section{Waveform Preprocessing}
\label{app:waveform_preprocessing}

We store each waveform as 5000 samples over a nonoverlapping 10s window, corresponding to the nominal 500 Hz rate and 2 ms sampling interval used for reference construction. Before model input, linear interpolation maps each window to 250 points while preserving its duration. The output mask is interpolated and retained only where its value is at least 0.999. The preprocessing does not impute failed waveform windows. Consecutive identical finite NIBP rows are treated as one cuff event and are not interpolated. NIBP may be missing because its validity indicator is part of the input. No train fitted waveform normalization is applied.

\section{Cohort and Device Counts}

\begin{table}[H]
    \centering
    \small
    \setlength{\tabcolsep}{4.5pt}
    \caption{Evaluation cohort sizes by task and split.}
    \label{tab:task_cohorts}
    \begin{tabular}{@{}llrr@{}}
        \toprule
        Task & Split & Windows & Patients \\
        \midrule
        \multirow{3}{*}{ABP reconstruction}
          & Training & 225058 & 1478 \\
          & Validation & 61974 & 495 \\
          & Test & 59239 & 498 \\
        \midrule
        \multirow{3}{*}{Time constant}
          & Training & 122398 & 1427 \\
          & Validation & \TauValidationWindows{} & \TauValidationPatients{} \\
          & Test & \TauTestWindows{} & \TauTestPatients{} \\
        \midrule
        \multirow{3}{*}{CO calibration}
          & Training & 59286 & 361 \\
          & Validation & 16321 & 122 \\
          & Test & \COTestWindows{} & \COTestPatients{} \\
        \bottomrule
    \end{tabular}
\end{table}

\begin{table}[H]
    \centering
    \small
    \setlength{\tabcolsep}{5pt}
    \caption{Waveform quality control rules.}
    \label{tab:qc_rules}
    \begin{tabularx}{\textwidth}{@{}p{1.25in}X@{}}
        \toprule
        Rule group & Applied criteria \\
        \midrule
        Finite samples and variation & ECG, PPG, and ABP require at least 98\% finite samples and at least 32 distinct values. \\
        ECG rhythm & Reject large robust range, discontinuous steps, baseline jumps, fewer than seven detected beats in 10 s, heart rate outside 45 to 180 beats/min, or RR coefficient of variation above 0.35. \\
        PPG morphology & Require measurable spread, at least seven peaks, and interbeat coefficient of variation at most 0.45. \\
        ABP morphology & Require MAP from 50 to 160 mmHg, adjacent steps no greater than 30 mmHg, limited flat segments, at least seven peaks and troughs, and interbeat coefficient of variation at most 0.40. Matched cycles require SBP from 70 to 220 mmHg, DBP from 30 to 140 mmHg, pulse pressure from 15 to 130 mmHg, and pulse pressure coefficient of variation at most 0.65. \\
        Pulse timing & Match ECG R peaks one to one to later ABP systolic peaks with lag from 0.05 to 0.45 s. At least 75\% must match, lag coefficient of variation must be at most 0.35, and PPG and matched ECG beat counts may differ by at most one. \\
        \bottomrule
    \end{tabularx}
\end{table}

  Cohort construction applies two filters to the quality-controlled windows.
Waveform quality control (Table~\ref{tab:qc_rules}) retains 1204591 of the
3746095 candidate windows. Patients are assigned to training, validation, and
test with no patient crossing splits. Within each patient, chronological
stratified sampling caps the contribution at 800 windows in training and 500 in
validation and test, which gives the 945499 windows for model input (616385 training,
163978 validation, 165136 test). Supervision targets are matched separately. A
window is NIBP-supervised when a cuff measurement falls within 300s of its
midpoint. Of the model input windows, 346271 carry such a match, and
NIBP supervision is applied to the 263482 matched training windows and
skipped on the rest.

\begin{table}[H]
    \centering
    \small
    \setlength{\tabcolsep}{2.5pt}
    \caption{Complete coordinate perturbation summaries. Brackets give 95\% percentile intervals from 2000 patient clustered bootstrap replicates. Pressure changes are in mmHg, CO changes are in L/min, and consistency is the fraction satisfying every prespecified sign. Blend columns hold the direct branch fixed.}
    \label{tab:sensitivity_full}
    \begin{tabularx}{\textwidth}{@{}p{1.25in}>{\centering\arraybackslash}X>{\centering\arraybackslash}X>{\centering\arraybackslash}X>{\centering\arraybackslash}X>{\centering\arraybackslash}X>{\centering\arraybackslash}X@{}}
    \toprule
    Scenario & Physics $\Delta$MAP & Blend $\Delta$MAP & Physics $\Delta$PP & Blend $\Delta$PP & $\Delta$CO & Consistency \\
    \midrule
    Vasodilator-like, mild
      & \makecell{$-14.918$\\{\scriptsize[$-15.258$, $-14.562$]}}
      & \makecell{$-6.321$\\{\scriptsize[$-6.464$, $-6.170$]}}
      & \makecell{11.992\\{\scriptsize[10.343, 13.634]}}
      & \makecell{4.037\\{\scriptsize[3.797, 4.275]}}
      & \makecell{0.000\\{\scriptsize[0.000, 0.000]}}
      & \makecell{99.4\%\\{\scriptsize[98.8, 99.9]}} \\
    Vasoconstrictor-like, mild
      & \makecell{18.152\\{\scriptsize[17.457, 18.874]}}
      & \makecell{7.691\\{\scriptsize[7.396, 7.997]}}
      & \makecell{$-13.411$\\{\scriptsize[$-16.252$, $-10.570$]}}
      & \makecell{$-4.425$\\{\scriptsize[$-4.889$, $-3.976$]}}
      & \makecell{0.000\\{\scriptsize[0.000, 0.000]}}
      & \makecell{91.3\%\\{\scriptsize[87.9, 94.4]}} \\
    Inotrope-like, mild
      & \makecell{14.364\\{\scriptsize[14.138, 14.585]}}
      & \makecell{6.086\\{\scriptsize[5.990, 6.180]}}
      & \makecell{4.075\\{\scriptsize[2.848, 5.175]}}
      & \makecell{$-0.964$\\{\scriptsize[$-1.419$, $-0.496$]}}
      & \makecell{1.453\\{\scriptsize[1.384, 1.519]}}
      & \makecell{70.5\%\\{\scriptsize[65.0, 75.5]}} \\
    Vasodilator-like, strong
      & \makecell{$-26.213$\\{\scriptsize[$-26.644$, $-25.752$]}}
      & \makecell{$-11.106$\\{\scriptsize[$-11.289$, $-10.911$]}}
      & \makecell{20.083\\{\scriptsize[17.155, 22.959]}}
      & \makecell{7.187\\{\scriptsize[6.791, 7.568]}}
      & \makecell{0.000\\{\scriptsize[0.000, 0.000]}}
      & \makecell{100.0\%\\{\scriptsize[100.0, 100.0]}} \\
    Vasoconstrictor-like, strong
      & \makecell{37.132\\{\scriptsize[35.946, 38.342]}}
      & \makecell{15.732\\{\scriptsize[15.230, 16.245]}}
      & \makecell{$-8.439$\\{\scriptsize[$-13.841$, $-3.243$]}}
      & \makecell{$-7.057$\\{\scriptsize[$-8.090$, $-5.965$]}}
      & \makecell{0.000\\{\scriptsize[0.000, 0.000]}}
      & \makecell{91.3\%\\{\scriptsize[87.9, 94.4]}} \\
    Inotrope-like, strong
      & \makecell{32.807\\{\scriptsize[32.291, 33.313]}}
      & \makecell{13.900\\{\scriptsize[13.681, 14.114]}}
      & \makecell{21.959\\{\scriptsize[20.197, 23.442]}}
      & \makecell{$-0.187$\\{\scriptsize[$-1.304$, 0.933]}}
      & \makecell{3.318\\{\scriptsize[3.161, 3.470]}}
      & \makecell{95.8\%\\{\scriptsize[93.0, 97.7]}} \\
    Venous pressure down
      & \makecell{$-2.687$\\{\scriptsize[$-2.746$, $-2.628$]}}
      & \makecell{$-1.138$\\{\scriptsize[$-1.164$, $-1.114$]}}
      & \makecell{$-1.469$\\{\scriptsize[$-1.694$, $-1.252$]}}
      & \makecell{$-0.343$\\{\scriptsize[$-0.411$, $-0.274$]}}
      & \makecell{0.000\\{\scriptsize[0.000, 0.000]}}
      & \makecell{100.0\%\\{\scriptsize[100.0, 100.0]}} \\
    Venous pressure up
      & \makecell{2.687\\{\scriptsize[2.628, 2.746]}}
      & \makecell{1.138\\{\scriptsize[1.114, 1.164]}}
      & \makecell{1.469\\{\scriptsize[1.252, 1.694]}}
      & \makecell{0.368\\{\scriptsize[0.298, 0.439]}}
      & \makecell{0.000\\{\scriptsize[0.000, 0.000]}}
      & \makecell{100.0\%\\{\scriptsize[100.0, 100.0]}} \\
    Phenylephrine-like
      & \makecell{11.760\\{\scriptsize[11.066, 12.471]}}
      & \makecell{4.983\\{\scriptsize[4.688, 5.284]}}
      & \makecell{$-16.380$\\{\scriptsize[$-18.824$, $-13.985$]}}
      & \makecell{$-3.409$\\{\scriptsize[$-3.893$, $-2.951$]}}
      & \makecell{$-0.487$\\{\scriptsize[$-0.509$, $-0.464$]}}
      & \makecell{90.5\%\\{\scriptsize[86.9, 93.9]}} \\
    Norepinephrine-like
      & \makecell{25.216\\{\scriptsize[24.506, 25.952]}}
      & \makecell{10.684\\{\scriptsize[10.383, 10.996]}}
      & \makecell{$-7.790$\\{\scriptsize[$-10.919$, $-4.623$]}}
      & \makecell{$-5.287$\\{\scriptsize[$-5.754$, $-4.806$]}}
      & \makecell{0.538\\{\scriptsize[0.513, 0.563]}}
      & \makecell{100.0\%\\{\scriptsize[100.0, 100.0]}} \\
    Dobutamine-like
      & \makecell{6.279\\{\scriptsize[5.863, 6.695]}}
      & \makecell{2.660\\{\scriptsize[2.484, 2.836]}}
      & \makecell{4.422\\{\scriptsize[3.575, 5.293]}}
      & \makecell{0.999\\{\scriptsize[0.528, 1.474]}}
      & \makecell{1.453\\{\scriptsize[1.384, 1.519]}}
      & \makecell{76.2\%\\{\scriptsize[71.3, 80.7]}} \\
    Fluid-bolus-like
      & \makecell{17.051\\{\scriptsize[16.771, 17.324]}}
      & \makecell{7.224\\{\scriptsize[7.106, 7.340]}}
      & \makecell{5.534\\{\scriptsize[4.253, 6.662]}}
      & \makecell{$-0.514$\\{\scriptsize[$-1.046$, 0.027]}}
      & \makecell{1.453\\{\scriptsize[1.384, 1.519]}}
      & \makecell{100.0\%\\{\scriptsize[100.0, 100.0]}} \\
    Preload-reduction-like
      & \makecell{$-13.873$\\{\scriptsize[$-14.098$, $-13.645$]}}
      & \makecell{$-5.878$\\{\scriptsize[$-5.973$, $-5.781$]}}
      & \makecell{3.609\\{\scriptsize[2.472, 4.693]}}
      & \makecell{1.577\\{\scriptsize[1.232, 1.927]}}
      & \makecell{$-1.131$\\{\scriptsize[$-1.183$, $-1.078$]}}
      & \makecell{100.0\%\\{\scriptsize[100.0, 100.0]}} \\
    \bottomrule
    \end{tabularx}
\end{table}

\begin{table}[H]
    \centering
    \small
    \setlength{\tabcolsep}{6pt}
    \caption{Split level cohort counts. The initial candidate total includes unassigned data. Later stages contain the three assigned splits only.}
    \label{tab:cohort}
    \begin{tabular}{@{}llrr@{}}
        \toprule
        Stage & Split & Windows & Patients \\
        \midrule
        \multirow{4}{*}{\makecell[l]{Before cleaning\\candidate windows}}
          & Training & 2135497 & 1535 \\
          & Validation & 703389 & 512 \\
          & Test & 710879 & 515 \\
          & All, including unassigned & 3746095 & 2808 \\
        \midrule
        \multirow{4}{*}{\makecell[l]{After waveform\\quality control}}
          & Training & 718961 & 1535 \\
          & Validation & 243007 & 512 \\
          & Test & 242623 & 515 \\
          & All & 1204591 & 2562 \\
        \midrule
        \multirow{4}{*}{\makecell[l]{After quality control\\and label-validity filtering}}
          & Training & 263482 & 1478 \\
          & Validation & 93383 & 495 \\
          & Test & 87884 & 498 \\
          & All & 444749 & 2471 \\
        \midrule
        \multirow{4}{*}{\makecell[l]{Final split selected\\model input}}
          & Training & 616385 & 1535 \\
          & Validation & 163978 & 512 \\
          & Test & 165136 & 515 \\
          & All & 945499 & 2562 \\
        \bottomrule
    \end{tabular}
\end{table}

\begin{table}[H]
    \centering
    \small
    \setlength{\tabcolsep}{5pt}
    \caption{Cardiac output labels by device and split before cleaning and in the final calibrated cohort. Device-specific rows may overlap. Totals are de-duplicated within each split. Solar8000 contributed no eligible CO labels.}
    \label{tab:co_devices}
    \begin{tabular}{@{}llrrrr@{}}
        \toprule
        & & \multicolumn{2}{c}{Before cleaning} & \multicolumn{2}{c}{Final CO calibrated} \\
        \cmidrule(lr){3-4}\cmidrule(l){5-6}
        Split & Device & Windows & Patients & Windows & Patients \\
        \midrule
        \multirow{5}{*}{Training}
          & Vigileo & 148990 & 127 & 15709 & 109 \\
          & EV1000 & 360096 & 258 & 42112 & 241 \\
          & Vigilance & 4993 & 5 & 249 & 3 \\
          & CardioQ & 24264 & 19 & 2410 & 16 \\
          & All devices union & 525952 & 400 & 59286 & 361 \\
        \midrule
        \multirow{4}{*}{Validation}
          & Vigileo & 36457 & 36 & 3501 & 30 \\
          & EV1000 & 142329 & 97 & 12479 & 91 \\
          & CardioQ & 3820 & 2 & 554 & 2 \\
          & All devices union & 180084 & 134 & 16321 & 122 \\
        \midrule
        \multirow{4}{*}{Test}
          & Vigileo & 54658 & 49 & 3584 & 42 \\
          & EV1000 & 131526 & 84 & 11826 & 77 \\
          & CardioQ & 4065 & 3 & 460 & 3 \\
          & All devices union & 186794 & 134 & \COTestWindows{} & \COTestPatients{} \\
        \bottomrule
    \end{tabular}
\end{table}

\section{Reference Time Constant Construction}
\label{app:tau_reference}

For each matched ABP cycle, the reference procedure locates the first negative peak after systole with prominence at least 1 mmHg or 3\% of pulse pressure. The decay segment starts 0.03s after this notch and ends 0.05s before the next trough. Segments shorter than 0.18s or with a pressure drop below 3 mmHg are rejected. For each candidate asymptote $P_\infty$ on a 64 point grid from 0 mmHg to 1 mmHg below the segment minimum, we fit Eq.~\ref{eq:tau_proxy}. The fit with the highest $R^2$ is retained if $R^2\geq0.70$ and $0.30\leq\tau_{\mathrm{wave}}\leq2.50$ s.

A window requires at least three valid beats. Its reference is the inverse variance weighted mean of beat level $\log\tauwave$, exponentiated for the second scale value. The stored standard error is the larger of the inverse information estimate and the between beat standard error. The reference fit uses 1.60M beat level decay segments.

\section{Reference Range Sensitivity}

Table~\ref{tab:tau_sensitivity} lists the values used to contextualize the main result. On the test set, HIPNO $\log\tauwave$ MAE is 0.287 in both intermediate ranges and 0.270 in the narrowest range. Pearson correlation declines as the reference range narrows, which is expected under range restriction.

\begin{table}[H]
    \centering
    \small
    \setlength{\tabcolsep}{3.5pt}
    \caption{Sensitivity of $\log\tauwave$ prediction to the accepted pressure-derived reference range. Brackets give 95\% patient bootstrap intervals. Counts are windows.}
    \label{tab:tau_sensitivity}
    \begin{tabular}{@{}llccccc@{}}
        \toprule
        Split & $\tauwave$ range in s & $n$ & Baseline MAE & HIPNO MAE & Pearson $r$ & Spearman $\rho$ \\
        \midrule
        \midrule
        Test & all & 34869 & \makecell{0.434\\{\scriptsize[0.408, 0.462]}} & \makecell{0.297\\{\scriptsize[0.276, 0.319]}} & \makecell{0.678\\{\scriptsize[0.614, 0.734]}} & \makecell{0.674\\{\scriptsize[0.615, 0.728]}} \\
        Test & $\tauwave\geq0.32$ & 29667 & \makecell{0.410\\{\scriptsize[0.383, 0.438]}} & \makecell{0.289\\{\scriptsize[0.267, 0.313]}} & \makecell{0.656\\{\scriptsize[0.588, 0.718]}} & \makecell{0.647\\{\scriptsize[0.580, 0.707]}} \\
        Test & $0.32\leq\tauwave\leq2.20$ & 28924 & \makecell{0.395\\{\scriptsize[0.369, 0.421]}} & \makecell{0.287\\{\scriptsize[0.265, 0.309]}} & \makecell{0.623\\{\scriptsize[0.554, 0.691]}} & \makecell{0.621\\{\scriptsize[0.553, 0.685]}} \\
        Test & $0.35\leq\tauwave\leq2.20$ & 25736 & \makecell{0.379\\{\scriptsize[0.352, 0.407]}} & \makecell{0.287\\{\scriptsize[0.264, 0.310]}} & \makecell{0.597\\{\scriptsize[0.518, 0.667]}} & \makecell{0.593\\{\scriptsize[0.518, 0.664]}} \\
        Test & $0.40\leq\tauwave\leq1.50$ & 18852 & \makecell{0.298\\{\scriptsize[0.277, 0.319]}} & \makecell{0.270\\{\scriptsize[0.249, 0.291]}} & \makecell{0.479\\{\scriptsize[0.392, 0.561]}} & \makecell{0.496\\{\scriptsize[0.411, 0.575]}} \\
        \bottomrule
    \end{tabular}
\end{table}

\section{Patient-Level Time Constant Agreement}
\label{app:tau_patient}

The patient-level test estimand first averages the predicted $\tauhat$ and reference $\tauwave$
within each patient and then computes the summary across patients, so every
patient receives equal weight. The test cohort contains $\TauTestPatients{}$
patients. For each interval, these patient identifiers are resampled with
replacement 5000 times, retaining all eligible windows for every selected
patient. The displayed limits are the 2.5th and 97.5th percentiles.

\begin{table}[H]
    \centering
    \small
    \setlength{\tabcolsep}{3.5pt}
    \caption{Patient level agreement between $\tauhat$ and $\tauwave$. Within patient means are formed before computing each metric. The full test rows use 5000 patient clustered bootstrap replicates. The 100-patient rows use 2000.}
    \label{tab:tau_patient_agreement}
    \textbf{Panel A (patient means)}\par\vspace{2pt}
    \begin{tabular}{@{}lrrccccccc@{}}
        \toprule
        Split & \makecell{Valid\\windows} & Patients & \makecell{Mean predicted\\$\tauhat$, s} & \makecell{Mean observed\\$\tauwave$, s} & \makecell{Patient\\MAE} & \makecell{Patient\\RMSE} & Log-MAE & $r$ & $\rho$ \\
        \midrule
        Validation & \TauValidationWindows{} & \TauValidationPatients{} & \makecell{0.705\\{\scriptsize[0.680, 0.733]}} & \makecell{0.794\\{\scriptsize[0.756, 0.834]}} & \makecell{0.233\\{\scriptsize[0.212, 0.254]}} & \makecell{0.324\\{\scriptsize[0.296, 0.353]}} & \makecell{0.284\\{\scriptsize[0.264, 0.306]}} & \makecell{0.699\\{\scriptsize[0.645, 0.746]}} & \makecell{0.690\\{\scriptsize[0.632, 0.741]}} \\
        Test & \TauTestWindows{} & \TauTestPatients{} & \makecell{0.668\\{\scriptsize[0.644, 0.694]}} & \makecell{0.764\\{\scriptsize[0.722, 0.804]}} & \makecell{0.237\\{\scriptsize[0.215, 0.260]}} & \makecell{0.342\\{\scriptsize[0.308, 0.377]}} & \makecell{0.297\\{\scriptsize[0.275, 0.319]}} & \makecell{0.672\\{\scriptsize[0.598, 0.739]}} & \makecell{0.682\\{\scriptsize[0.624, 0.736]}} \\
        \bottomrule
    \end{tabular}\par
    \vspace{5pt}
    \textbf{Panel B (agreement metrics)}\par\vspace{2pt}
    \begin{tabularx}{\textwidth}{@{}l>{\centering\arraybackslash}X>{\centering\arraybackslash}X>{\centering\arraybackslash}X>{\centering\arraybackslash}X>{\centering\arraybackslash}X@{}}
        \toprule
        Metric space & MAE & RMSE & Bias & Pearson $r$ & Spearman $\rho$ \\
        \midrule
        Absolute scale (s)
          & \makecell{0.237\\{\scriptsize[0.215, 0.260]}}
          & \makecell{0.342\\{\scriptsize[0.308, 0.377]}}
          & \makecell{$-0.095$\\{\scriptsize[$-0.127$, $-0.068$]}}
          & \makecell{0.672\\{\scriptsize[0.598, 0.739]}}
          & \makecell{0.682\\{\scriptsize[0.624, 0.736]}} \\
        Logarithmic scale
          & \makecell{0.297\\{\scriptsize[0.276, 0.319]}}
          & \makecell{0.382\\{\scriptsize[0.351, 0.412]}}
          & \makecell{$-0.039$\\{\scriptsize[$-0.072$, $-0.004$]}}
          & \makecell{0.678\\{\scriptsize[0.612, 0.733]}}
          & \makecell{0.674\\{\scriptsize[0.610, 0.728]}} \\
        \midrule
        100-patient absolute scale (s)
          & \makecell{0.281\\{\scriptsize[0.260, 0.303]}}
          & \makecell{0.415\\{\scriptsize[0.388, 0.445]}}
          & \makecell{$-0.094$\\{\scriptsize[$-0.129$, $-0.064$]}}
          & \makecell{0.647\\{\scriptsize[0.577, 0.708]}}
          & \makecell{0.626\\{\scriptsize[0.575, 0.673]}} \\
        100-patient logarithmic scale
          & \makecell{0.364\\{\scriptsize[0.344, 0.385]}}
          & \makecell{0.476\\{\scriptsize[0.451, 0.504]}}
          & \makecell{$-0.044$\\{\scriptsize[$-0.085$, $-0.002$]}}
          & \makecell{0.650\\{\scriptsize[0.593, 0.699]}}
          & \makecell{0.626\\{\scriptsize[0.574, 0.671]}} \\
        \bottomrule
    \end{tabularx}
\end{table}

\section{Reconstruction Stratified by Time Since Cuff Measurement}
\label{app:cuff_age_reconstruction}

Reconstruction error was stable across bins of time since the last NIBP measurement on the matched CO test subset ($\COTestWindows{}$ windows, $\COTestPatients{}$ patients, every window has a valid cuff match within the 300s limit). Across the 0-60, 61-180, and 181-300s bins, blended MAP MAE ranged 7.4-7.8 mmHg and blended waveform MAE was $\approx$12.2 mmHg.

\section{Device Matching and Agreement with Delay Accounted}
\label{app:delay_matching}

The device transfer analysis reports both a deployment like contemporaneous match and a delay-aware forward match. A contemporaneous match pairs the prediction at time $t_i$ one to one with the record from the same subject, record, device, and timestamp. For a device with an update delay $\delta$, the forward oracle match searches only after the prediction time and within the allowed physical window.
\begin{equation}
 j^*(i,\delta)=\underset{j:\,0\leq t_j-t_i\leq\delta}{\arg\min}\;\left|\widehat y_i-y_j\right|,
 \quad y_i^{\mathrm{oracle}}=y_{j^*(i,\delta)}.
 \label{eq:forward_oracle}
\end{equation}
Windows without an eligible record are omitted. The oracle match is an evaluation diagnostic because it uses the realized prediction error to select a reference. It is not a deployment time predictor. We use $\delta=1$s for CardioQ, $20$s for EV1000/FloTrac, and $180$s or $420$s for Vigilance. These values reflect the instruments' update mechanisms: beat-to-beat esophageal Doppler for CardioQ \citep{singer1993esophageal}, rolling pulse contour integration for FloTrac \citep{manecke2005edwards,saugel2021cardiac}, and stochastic heating/cross-correlation with slow and STAT update modes for Vigilance \citep{yelderman1990continuous,siegel1996delayed}.

For a small target device, a combined cohort pools that device's eligible observations from training, validation, and test after removing every target device patient from calibration model fitting. Thus CardioQ contributes a 21-patient combined evaluation cohort and Vigilance a 3-patient cohort, while the EV1000 rows retain their split labels. A patient clustered bootstrap samples patient identifiers with replacement and retains all eligible windows from each selected patient with its multiplicity. We use 2000 replicates for the rows accounting for delay. Repeated measures variance decomposes the residual variance into within patient variation and variation of patient average residuals, $s_{\mathrm{RM}}^2=s_{\mathrm{within}}^2+s_{\mathrm{between}}^2$, before computing percentage error. For prediction $\widehat y$ and reference $y$, the reported columns are
\begin{equation}
\begin{split}
 \mathrm{Bias}&=\overline{\widehat y-y},\quad
 \mathrm{LoA}_{\pm}=\mathrm{Bias}\pm1.96s_{\mathrm{RM}},\\
 \mathrm{PE}&=100\,\frac{1.96s_{\mathrm{RM}}}{(\overline{\widehat y}+\overline y)/2},\quad
 \mathrm{MAPE}=\frac{100}{n}\sum_{i=1}^{n}\frac{|\widehat y_i-y_i|}{|y_i|}.
\end{split}
 \label{eq:co_agreement_columns}
\end{equation}
The table reports MAE, RMSE, bias corrected PE, and Pearson correlation. All intervals are the 2.5th to 97.5th percentiles of the clustered bootstrap replicates.

\begin{table}[H]
    \centering
    \small
    \setlength{\tabcolsep}{3pt}
    \renewcommand{\arraystretch}{1.08}
    \caption{Delay-aware target device transfer for independently refit Models D and E (forward windows in the row labels). Parentheses give 95\% patient cluster bootstrap intervals. PE uses the repeated measures residual standard deviation.}
    \label{tab:delay_device_transfer}
    \begin{tabularx}{\textwidth}{@{}p{1.60in}c>{\centering\arraybackslash}X>{\centering\arraybackslash}X>{\centering\arraybackslash}X>{\centering\arraybackslash}X>{\centering\arraybackslash}X@{}}
        \toprule
        \makecell[l]{Device cohort} & Model & MAE & RMSE & Bias & PE & Pearson $r$ \\
        \midrule
        CardioQ (1s delay) & D
          & \makecell{1.07\\{\scriptsize[0.85, 1.34]}}
          & \makecell{1.39\\{\scriptsize[1.12, 1.69]}}
          & \makecell{$-0.00$\\{\scriptsize[$-0.44$, 0.43]}}
          & \makecell{57.0\%\\{\scriptsize[47.2\%, 64.3\%]}}
          & \makecell{0.66\\{\scriptsize[0.39, 0.78]}} \\
        CardioQ (1s delay) & E
          & \makecell{1.08\\{\scriptsize[0.87, 1.34]}}
          & \makecell{1.41\\{\scriptsize[1.15, 1.70]}}
          & \makecell{0.03\\{\scriptsize[$-0.40$, 0.47]}}
          & \makecell{57.5\%\\{\scriptsize[47.9\%, 65.0\%]}}
          & \makecell{0.64\\{\scriptsize[0.36, 0.76]}} \\
        \midrule
        EV1000 training (20 s) & D
          & \makecell{0.88\\{\scriptsize[0.79, 0.98]}}
          & \makecell{1.21\\{\scriptsize[1.08, 1.33]}}
          & \makecell{$-0.22$\\{\scriptsize[$-0.36$, $-0.07$]}}
          & \makecell{45.5\%\\{\scriptsize[41.6\%, 49.1\%]}}
          & \makecell{0.71\\{\scriptsize[0.66, 0.76]}} \\
        EV1000 training (20 s) & E
          & \makecell{0.88\\{\scriptsize[0.79, 0.99]}}
          & \makecell{1.23\\{\scriptsize[1.08, 1.40]}}
          & \makecell{$-0.14$\\{\scriptsize[$-0.29$, 0.01]}}
          & \makecell{46.6\%\\{\scriptsize[41.4\%, 51.8\%]}}
          & \makecell{0.70\\{\scriptsize[0.65, 0.76]}} \\
        EV1000 validation (20 s) & D
          & \makecell{0.92\\{\scriptsize[0.79, 1.06]}}
          & \makecell{1.24\\{\scriptsize[1.07, 1.42]}}
          & \makecell{$-0.12$\\{\scriptsize[$-0.35$, 0.12]}}
          & \makecell{48.3\%\\{\scriptsize[42.6\%, 53.2\%]}}
          & \makecell{0.68\\{\scriptsize[0.59, 0.76]}} \\
        EV1000 validation (20 s) & E
          & \makecell{0.93\\{\scriptsize[0.81, 1.07]}}
          & \makecell{1.25\\{\scriptsize[1.09, 1.42]}}
          & \makecell{$-0.06$\\{\scriptsize[$-0.29$, 0.18]}}
          & \makecell{48.6\%\\{\scriptsize[43.3\%, 53.3\%]}}
          & \makecell{0.67\\{\scriptsize[0.58, 0.75]}} \\
        EV1000 test (20 s) & D
          & \makecell{1.01\\{\scriptsize[0.83, 1.21]}}
          & \makecell{1.37\\{\scriptsize[1.13, 1.59]}}
          & \makecell{$-0.49$\\{\scriptsize[$-0.76$, $-0.22$]}}
          & \makecell{45.1\%\\{\scriptsize[39.1\%, 50.3\%]}}
          & \makecell{0.76\\{\scriptsize[0.67, 0.84]}} \\
        EV1000 test (20 s) & E
          & \makecell{0.98\\{\scriptsize[0.81, 1.16]}}
          & \makecell{1.33\\{\scriptsize[1.10, 1.55]}}
          & \makecell{$-0.40$\\{\scriptsize[$-0.67$, $-0.15$]}}
          & \makecell{44.5\%\\{\scriptsize[38.3\%, 50.0\%]}}
          & \makecell{0.76\\{\scriptsize[0.66, 0.84]}} \\
        \midrule
        Vigilance (3 min) & D
          & \makecell{0.41\\{\scriptsize[0.19, 1.31]}}
          & \makecell{0.59\\{\scriptsize[0.25, 1.32]}}
          & \makecell{$-0.16$\\{\scriptsize[$-0.59$, 1.31]}}
          & \makecell{30.1\%\\{\scriptsize[11.8\%, 44.8\%]}}
          & \makecell{0.68\\{\scriptsize[$-0.59$, 0.80]}} \\
        Vigilance (3 min) & E
          & \makecell{0.43\\{\scriptsize[0.22, 1.36]}}
          & \makecell{0.61\\{\scriptsize[0.29, 1.38]}}
          & \makecell{$-0.19$\\{\scriptsize[$-0.58$, 1.36]}}
          & \makecell{30.6\%\\{\scriptsize[14.6\%, 45.4\%]}}
          & \makecell{0.68\\{\scriptsize[$-0.53$, 0.80]}} \\
        Vigilance (7 min) & D
          & \makecell{0.33\\{\scriptsize[0.12, 1.31]}}
          & \makecell{0.52\\{\scriptsize[0.19, 1.32]}}
          & \makecell{$-0.10$\\{\scriptsize[$-0.48$, 1.31]}}
          & \makecell{27.2\%\\{\scriptsize[11.0\%, 41.8\%]}}
          & \makecell{0.71\\{\scriptsize[$-0.49$, 0.84]}} \\
        Vigilance (7 min) & E
          & \makecell{0.35\\{\scriptsize[0.15, 1.36]}}
          & \makecell{0.53\\{\scriptsize[0.22, 1.38]}}
          & \makecell{$-0.12$\\{\scriptsize[$-0.46$, 1.36]}}
          & \makecell{27.9\%\\{\scriptsize[12.3\%, 42.6\%]}}
          & \makecell{0.71\\{\scriptsize[$-0.41$, 0.84]}} \\
        \bottomrule
    \end{tabularx}
\end{table}

\section{Complete Coordinate Perturbation Results}

\begin{table}[H]
    \centering
    \small
    \setlength{\tabcolsep}{5pt}
    \caption{Expected response directions for coordinate perturbations. A zero change fails a required nonzero sign.}
    \label{tab:direction_rules}
    \begin{tabularx}{\textwidth}{@{}p{2.0in}X@{}}
    \toprule
    Scenario & Required signs \\
    \midrule
    Vasodilator-like and vasoconstrictor-like
      & MAP, DBP, and $\log\taumodel$ follow the sign of $\Delta\log\taumodel$. \\
    Inotrope-like
      & MAP, pulse pressure, and CO increase. \\
    Venous pressure down/up
      & MAP, DBP, and $P_v$ follow the sign of $\Delta P_v$. \\
    Phenylephrine-like
      & MAP, DBP, and $\log\taumodel$ increase. CO decreases. \\
    Norepinephrine-like
      & MAP, DBP, CO, and $\log\taumodel$ increase. \\
    Dobutamine-like
      & Pulse pressure and CO increase. $\log\taumodel$ decreases. \\
    Fluid-bolus-like and preload-reduction-like
      & MAP, CO, and $P_v$ follow their imposed directions. \\
    \bottomrule
    \end{tabularx}
\end{table}

\section{Window Level Counterfactual Summary}
\label{app:counterfactual_window_section}

Table~\ref{tab:counterfactual_window} maps each coordinate perturbation to a clinical class and reports the window level pass rate. The pass rate is the fraction of held-out windows for which every output with a prespecified nonzero direction has the expected sign. It is distinct from the patient-level bootstrap interval on the mean response in Table~\ref{tab:sensitivity_full}. Values and 95\% intervals use the same 2000-replicate patient clustered resampling protocol described in Section~\ref{app:evaluation_details}.

\begingroup
\small
\hbadness=10000
\setlength{\tabcolsep}{2.5pt}
\renewcommand{\arraystretch}{1.10}
\setlength{\LTleft}{0pt}
\setlength{\LTright}{0pt}
\setcounter{LTchunksize}{4}
\begin{longtable}{@{}>{\raggedright\arraybackslash}p{1.00in}>{\raggedright\arraybackslash}p{1.30in}>{\raggedright\arraybackslash}p{1.10in}>{\centering\arraybackslash}p{0.68in}>{\centering\arraybackslash}p{0.68in}>{\centering\arraybackslash}p{0.68in}>{\centering\arraybackslash}p{0.68in}@{}}
        \multicolumn{7}{@{}p{6.62in}@{}}{\label{tab:counterfactual_window}\textbf{Table~\thetable: Window level counterfactual summary with clinical class examples.} Pressure changes in mmHg, CO in L/min. Brackets give 95\% patient cluster bootstrap intervals for the mean changes, and pass rate is across windows.}\\
        \toprule
        \makecell[l]{Scenario} & \makecell[l]{Clinical class and examples} & \makecell[l]{Axis perturbation} & \makecell{$\Delta$MAP\\(mmHg)} & \makecell{$\Delta$PP\\(mmHg)} & \makecell{$\Delta$CO\\(L/min)} & \makecell{Window\\pass rate} \\
        \midrule
        \endfirsthead
        \multicolumn{7}{c}{\tablename\ \thetable{} -- continued}\\
        \toprule
        \makecell[l]{Scenario} & \makecell[l]{Clinical class and examples} & \makecell[l]{Axis perturbation} & \makecell{$\Delta$MAP\\(mmHg)} & \makecell{$\Delta$PP\\(mmHg)} & \makecell{$\Delta$CO\\(L/min)} & \makecell{Window\\pass rate} \\
        \midrule
        \endhead
        Vasodilator-like (mild)
          & Vasodilators (nitroglycerin, nitroprusside, nicardipine) \citep{koch1974vasodilator}
          & $\Delta\log\taumodel=-0.25$
          & \makecell{$-15.1$\\{\scriptsize[$-15.4$, $-14.7$]}}
          & \makecell{11.3\\{\scriptsize[9.8, 12.8]}}
          & 0.00 & 99.4\% \\
        Vasodilator-like (strong)
          & Same class as preceding row & $\Delta\log\taumodel=-0.50$
          & \makecell{$-26.3$\\{\scriptsize[$-26.7$, $-25.8$]}}
          & \makecell{18.7\\{\scriptsize[16.3, 21.1]}}
          & 0.00 & 100.0\% \\
        Vasoconstrictor-like (mild)
          & Vasoconstrictive vasopressors (phenylephrine, vasopressin) \citep{holmes2005vasoactive,jentzer2015pharmacotherapy}
          & $\Delta\log\taumodel=+0.25$
          & \makecell{18.4\\{\scriptsize[17.8, 19.0]}}
          & \makecell{$-13.8$\\{\scriptsize[$-16.2$, $-11.3$]}}
          & 0.00 & 91.3\% \\
        Vasoconstrictor-like (strong)
          & Same class as preceding row & $\Delta\log\taumodel=+0.50$
          & \makecell{37.7\\{\scriptsize[36.7, 38.7]}}
          & \makecell{$-9.8$\\{\scriptsize[$-14.2$, $-5.5$]}}
          & 0.00 & 91.3\% \\
        Positive-inotrope-like (mild)
          & Positive inotropy (contractility support, modeled as an increase in flow drive $U$) \citep{overgaard2008inotropes}
          & $\Delta\log U=+0.25$
          & \makecell{14.3\\{\scriptsize[14.1, 14.5]}}
          & \makecell{4.5\\{\scriptsize[3.5, 5.5]}}
          & \makecell{1.41\\{\scriptsize[1.35, 1.46]}} & 70.5\% \\
        Positive-inotrope-like (strong)
          & Same class as preceding row & $\Delta\log U=+0.50$
          & \makecell{32.7\\{\scriptsize[32.3, 33.2]}}
          & \makecell{22.4\\{\scriptsize[21.0, 23.7]}}
          & \makecell{3.22\\{\scriptsize[3.09, 3.34]}} & 95.8\% \\
        Venous pressure decrease
          & Preload reduction (decongestion, venodilation, ultrafiltration) \citep{gelman2008venous,hamzaoui2022central}
          & $\Delta P_v=-3.0$ mmHg
          & \makecell{$-2.7$\\{\scriptsize[$-2.7$, $-2.6$]}}
          & \makecell{$-1.6$\\{\scriptsize[$-1.7$, $-1.4$]}}
          & 0.00 & 100.0\% \\
        Venous pressure increase
          & Same preload class as preceding row & $\Delta P_v=+3.0$ mmHg
          & \makecell{2.7\\{\scriptsize[2.6, 2.7]}}
          & \makecell{1.6\\{\scriptsize[1.4, 1.7]}}
          & 0.00 & 100.0\% \\
        Norepinephrine-like
          & Mixed vasopressor/inotrope (norepinephrine) \citep{holmes2005vasoactive,jentzer2015pharmacotherapy}
          & $\Delta\log\taumodel=+0.25$, $\Delta\log U=+0.10$
          & \makecell{25.5\\{\scriptsize[24.9, 26.1]}}
          & \makecell{$-8.3$\\{\scriptsize[$-11.0$, $-5.4$]}}
          & \makecell{0.52\\{\scriptsize[0.50, 0.54]}} & 100.0\% \\
        Phenylephrine-like
          & Pure $\alpha_1$ agonist (phenylephrine) \citep{thiele2011clinical}
          & $\Delta\log\taumodel=+0.25$, $\Delta\log U=-0.10$
          & \makecell{12.0\\{\scriptsize[11.5, 12.6]}}
          & \makecell{$-16.8$\\{\scriptsize[$-19.0$, $-14.7$]}}
          & \makecell{$-0.47$\\{\scriptsize[$-0.49$, $-0.45$]}} & 90.5\% \\
        Dobutamine-like
          & Inotrope/inodilator (dobutamine) \citep{ruffolo1987pharmacology,overgaard2008inotropes}
          & $\Delta\log\taumodel=-0.10$, $\Delta\log U=+0.25$
          & \makecell{6.1\\{\scriptsize[5.8, 6.5]}}
          & \makecell{4.8\\{\scriptsize[4.0, 5.6]}}
          & \makecell{1.41\\{\scriptsize[1.35, 1.47]}} & 76.2\% \\
        Fluid-bolus-like
          & Preload augmentation (crystalloid, colloid, blood products) \citep{gelman2008venous}
          & $\Delta\log U=+0.25$, $\Delta P_v=+3.0$ mmHg
          & \makecell{17.0\\{\scriptsize[16.8, 17.2]}}
          & \makecell{6.0\\{\scriptsize[4.9, 7.0]}}
          & \makecell{1.41\\{\scriptsize[1.35, 1.47]}} & 100.0\% \\
        Preload-reduction-like
          & Decongestion or venodilation (furosemide, nitroglycerin, ultrafiltration) \citep{gelman2008venous,koch1974vasodilator}
          & $\Delta\log U=-0.25$, $\Delta P_v=-3.0$ mmHg
          & \makecell{$-13.8$\\{\scriptsize[$-14.0$, $-13.6$]}}
          & \makecell{3.1\\{\scriptsize[2.1, 4.1]}}
          & \makecell{$-1.10$\\{\scriptsize[$-1.14$, $-1.05$]}} & 100.0\% \\
        \bottomrule
    \end{longtable}
\endgroup

\end{document}